\documentclass[11pt]{article}
\usepackage{mathtools,tikz,parskip,algorithm,amssymb}
\usepackage[noend]{algpseudocode}
\usepackage{geometry}
\usepackage{times}
\usepackage[caption=false]{subfig}
\usepackage{graphicx}
\geometry{verbose,letterpaper,tmargin=1in,bmargin=1in,lmargin=1in,rmargin=1in}
\usepackage{amsmath,amsthm}

\newtheorem{theorem}{Theorem}[section]

\newtheorem{proposition}[theorem]{Proposition}

\usetikzlibrary{positioning,decorations.pathmorphing,shapes,decorations.pathreplacing}

% Document starts
\begin{document}
\title{Comparison-Based Choices}   
\author{  
Jon Kleinberg
\thanks{
Department of Computer Science, Cornell University.
Email: kleinber@cs.cornell.edu
}
 \and 
Sendhil Mullainathan
\thanks{
Department of Economics, Harvard University.
Email: mullain@fas.harvard.edu
}
 \and 
Johan Ugander
\thanks{
Department of Management Science \& Engineering, Stanford University.
Email: jugander@stanford.edu.
}
}
\date{}

\maketitle

\begin{abstract}
A broad range of on-line behaviors are mediated by interfaces in which people make choices among sets of options. A rich and growing line of work in the behavioral sciences indicate that human choices follow not only from the utility of alternatives, but also from the {\em choice set} in  which alternatives are presented. In this work we study comparison-based choice functions, a simple but surprisingly rich class of functions capable of exhibiting so-called {\em choice-set effects}. Motivated by the challenge of predicting complex choices, we study the query complexity of these functions in a variety of settings. We consider settings that allow for active queries or passive observation of a stream of queries, and give analyses both at the granularity of individuals or populations that might exhibit heterogeneous choice behavior. Our main result is that any comparison-based choice function in one dimension can be inferred as efficiently as a basic maximum or minimum choice function across many query contexts, suggesting that choice-set effects need not entail any fundamental algorithmic barriers to inference. We also introduce a class of choice functions we call distance-comparison-based functions, and briefly discuss the analysis of such functions. The framework we outline provides intriguing connections between human choice behavior and a range of questions in the theory of sorting.
\end{abstract}

\let\thefootnote\relax\footnotetext{This work is supported in part by 
a Simons Investigator Award, an ARO MURI grant, a Google Research Grant, Facebook Faculty Research Grants,
and a David Morgenthaler II Faculty Fellowship.}

% The default list of authors is too long for headers}
%\renewcommand{\shortauthors}{J. Kleinberg et al.}

\newlength{\myfigwidth}
\setlength{\myfigwidth}{0.55\columnwidth}
\newlength{\mynegvspace}
\setlength{\mynegvspace}{0pt}

\newcommand{\sign}{\operatornamewithlimits{sign}}
\def\eps{{\varepsilon}}
\newcommand{\johan}[1]{\textcolor{blue}{#1}}
\newcommand{\johancom}[1]{ \textcolor{blue}{[{\it #1}]}}
%\renewcommand{\johan}[1]{#1}
%\renewcommand{\johancom}[1]{}

% probabilities

\def\Prf{{\rm Pr}}
\def\D{{\mathcal D}}
\def\ev{{\mathcal E}}
\def\evf{{\mathcal F}}

\newcommand{\Prb}[1]{
\Prf\left[{#1}\right]
}

\newcommand{\Prg}[2]{
\Prf\left[{#1}~|~{#2}\right]
}

\newcommand{\Exp}[1]{
E\left[{#1}\right]
}

\newcommand{\Expg}[2]{
E\left[{#1}~|~{#2}\right]
}

\newcommand{\Prbt}[1]{
\Prf\left[\mbox{\em #1}\right]
}

\newcommand{\Prgt}[2]{
\Prf\left[\mbox{\em #1}~|~\mbox{\em #2}\right]
}

\newcommand{\fix}{\marginpar{FIX}}
\newcommand{\todo}{\marginpar{TODO}}
\newcommand{\new}{\marginpar{NEW}}
\newcommand{\Cov}{{\text{Cov}}}
\newcommand{\Var}{{\text{Var}}}
\renewcommand{\Pr}{{\text{Pr}}}
\newcommand{\xhdr}[1]{{\bf #1.}}
\newcommand{\omt}[1]{}

% Renew this after \maketitle if the default list of authors is too long for headers
%\renewcommand{\shortauthors}{W.\ Vickrey et.\ al.}

\section{Introduction}

Modern information systems across a broad range of domains
wrestle with increasingly expansive and heterogeneous
corpora of data describing human decisions ---
from online marketplaces such as Amazon and AirBnB
recording diverse purchasing behavior to
search engines and recommendation systems that process 
continuous clickstreams, where clicks reflect choices.

Many of the decisions made in these on-line environments have
a consistent structure: users are presented with a set of options ---
a list of recommended books, search results, flight options,
or users to follow --- and they select one or more items from the set.
The ubiquity of this framework has led to an active line of 
research on the problem of learning ranking functions
\cite{agarwal-ranking-nips}.
There are many formulations in the literature, but the essential
core is to assume that a user has a ranking over alternatives
that is hidden from us, and by observing their choices we would
like to infer this ranking.
This question can be asked at the level of individuals or at
the level of populations containing heterogeneous but related rankings.

\vspace{\mynegvspace}
\xhdr{Choice-set effects}
A central assumption in the work on rankings is that individuals
start from an underlying total order on the possible alternatives.
Roughly speaking, consider an inventory 
$Z = \{z_1, z_2, \ldots, z_n\}$
of possible choices, globally ordered so that a particular user 
prefers $z_i$ to $z_j$ for $i < j$.
When the user is presented with a set of choices $S \subseteq Z$,
they select the item in $S$ that is ranked first according to this global order.
There are many variations on this basic pattern --- there may be
noise in the selection, or options may be unobserved --- but 
the total ordering is a crucial part of the structure.

A long line of work in behavioral science \cite{simonson-choice-context,shafir-reason-based,kamenica-contextual-inference,echenique-money-pump,trueblood2013not}, however,
indicates that people's empirical choices tend to deviate
in significant ways from this picture.
A person's choice will in fact typically depend not just on the total
ordering on the full set of alternatives $Z$, which is often enormous
and implicit, but also on the {\em choice set} $S$ they are
shown as part of the specific decision at hand. Notable examples of such choice-set effects
include the compromise effect, similarity aversion, and the decoy effect (also 
known as asymmetric dominance), and have been observed in diverse contexts
both offline and online;
recent work in machine learning has highlighted the
prevalence of choice-set effects \cite{sheffet-preference-flips,benson2016relevance,yin2014ad,ragain2016pairwise}
and non-transitive comparisons \cite{chen2016modeling,chen2016predicting}, particularly in online search, 
online ads, and online games.

A canonical example of a choice-set effect is the so-called {\em compromise effect}
\cite{simonson-compromise-effect}:~faced 
with the choice between a mediocre option for \$10, 
a good option for \$15, and an excellent option for \$20,
people will tend to choose the good option for \$15;
but faced instead with a choice between the same
good option for \$15, the excellent option for \$20,
and an outstanding option for \$25, 
people will be more likely to choose the excellent option for \$20.
In other words, there is a tendency to favor the option that
compromises between the two extremes in the set that is presented.

More generally, suppose that a set of products exhibit a 
trade-off between two factors --- for example, price vs.~quality, 
power vs.~reliability, or aesthetics vs.~fuel efficiency.
There is a naturally defined one-dimensional set of products
that are on the {\em Pareto frontier}: they are the options
for which no other product is more desirable in both factors.
Consider four products $A$, $B$, $C$, and $D$ that are arranged
in this order along the one-dimensional frontier ($A$ is the 
least powerful but most reliable, and $D$ as the most
powerful but least reliable).
In a wide variety of settings, $B$ will 
be empirically chosen more often when it is presented as one of the three
choices $S = \{A, B, C\}$ than when it is one of the three choices
$S' = \{B, C, D\}$ --- in other words, when it is the compromise option
rather than an extreme option.
By the same reasoning, $C$ will be chosen more often when it is
presented as part of $S'$ than as part of $S$.

One can ask whether this is simply a representational issue, but
in fact it is much deeper. The point is that in any model based
on a total ordering on the alternatives $Z$, a user would prefer
either $B$ or $C$, and this would hold in {\em any} choice set $S$
that contains both of them.
But $B$ and $C$ are options in both the sets $S = \{A, B, C\}$ and
$S' = \{B, C, D\}$, and the aggregate distribution of
choices in favor of $B$ versus $C$ changes across these two sets.
This suggests that the relative ordering of $B$ and $C$ in fact is
not well-defined in isolation, but must be evaluated in the context
of the choice set ($S$ or $S'$).
It matters what else is on offer. 

\vspace{\mynegvspace}
\xhdr{Comparison-based choice functions}
In this paper, we consider a set of basic algorithmic questions
that arise when we seek to infer models of user choice
in the presence of comparison-based choice-set effects.
We focus on a basic formulation in which these effects arise,
and show how our framework makes it possible to derive asymptotic
bounds on query complexity and sample complexity in performing
these types of inference.
We set up the problem as follows.

{\em (1) Embedding of alternatives.}
The full collection of alternatives is a set $U = \{u_1, \ldots, u_n\}$, 
and there is an embedding of $U$ into a space of {\em attributes} $X$
via a function $h : U \rightarrow X$.
For most of our discussion we will take $X$ to be the real line ${\mathbb R}^1$,
so that $h(u_i)$ is a real number; thus most of the time
we can think of the embedded points $\{h(u_1), \ldots, h(u_n)\}$ 
as representing a one-dimensional trade-off continuum as in the
discussion above.
We will assume that the embedding $h$ is not known to us.
Indeed, while we sometimes might know the crucial one-dimensional 
attribute of an item, such as a price, the important
attributes in many settings will not be explicitly presented to us:
a user might have a mental ordering of clothing styles on a spectrum from
``too dull" to ``too ostentatious,''
or restaurants on a spectrum from ``too bland" to
``too exotic," or book or movie recommendations
on a spectrum from ``too much what I'm already reading" to ``too far from my
interests."

{\em (2) Choice functions.}
We focus on an individual who is presented with a subset $S \subseteq U$
of a fixed size $k$ and chooses one element from $S$.
Throughout this discussion we will refer to subsets of $U$ of size $k$
as {\em $k$-sets} of $U$.
We study $k$-sets, rather than subsets of arbitrary/varying size,
both for conceptual clarity and also with the motivation that many
of our motivating applications --- online recommendations and search results
--- often present choices between a fixed number of options.

An individual's selections are represented using a 
{\em choice function} $f$: for each $k$-set $S \subseteq U$, 
we define $f(S) \in S$ as the individual's selection when presented with $S$.
We say that $f$ exhibits {\it choice-set effects} if 
the identity of the set $S$ affects the relative choice between two elements:
specifically, $f$ exhibits choice-set effects if 
there exist $k$-sets $S$ and $T$, and elements
$u_i$ and $u_j$, such that $u_i, u_j \in S \cap T$, and
$f(S) = u_i$ while $f(T) = u_j$.
If such choices can occur then a choice between $u_i$ and $u_j$ 
depends on whether they are presented in the context of $S$ or $T$;
one can view such a contextual effect as a violation of
the independence of irrelevant alternatives (``IIA'') \cite{sen1969necessary}.

We define choice functions here as deterministic for a given individual.
Later in this work we study 
populations composed of a mixture of different choice functions,
and our results there can be interpreted equivalently as applying to the
choices of an individual making probabilistic decisions corresponding to 
a randomization over different choice functions. 
We can contrast such probabilistic choices 
(those expressible as mixtures), 
with random utility models (RUMs)
\cite{baltas-doyle,luce1977choice}:
they lack some of the flexibility of RUMs; but 
discrete choice models such as RUMs on the other hand 
typically entail other restrictions, including
assuming the independence of irrelevant alternatives. 
Identifying the expressive limits of
random mixtures of choice functions
for modeling probabilistic choice is an intriguing open challenge.

% Possible refs
\iffalse
\cite{sen1969necessary}
\footnote{The condition has an earlier history in social choice theory \cite{arrow1951social} 
and the theory of min-max risk decisions \cite{chernoff1954rational}, 
but questions and formulations in those contexts are rather different.} 
\fi

{\em (3) Comparison-based functions.}
We focus here on comparison-based choice functions, which 
incorporate the ordinal structure inherent in preference learning,
but which are still rich enough to exhibit choice-set effects.
A choice function $f$ on $k$-sets is {\em comparison-based} 
if the value $f(S)$ can be computed purely using comparisons
on the ordering of the numbers $\{h(u_i) : u_i \in S\}$.
It is not hard to see that for any comparison-based choice function
$f$ on $k$-sets, there is a number $\ell$ between $1$ and $k$ so that 
for all $k$-sets $S$, the value
$f(S)$ is equal to the $\ell^{\rm th}$ ranked element in $S$ according
to the embedding $h(\cdot)$. We can therefore characterize 
any comparison-based choice functions in one dimension as a 
{\em position-selecting choice function} for some position $\ell$ of $k$.

Thus, comparison-based choice functions represent different versions
of the compromise effects discussed earlier: faced with a set $S$
of $k$ options ranked along a one-dimensional spectrum, an
individual would choose the $\ell^{\rm th}$ ranked option.
For example, when $k = 3$ and $\ell = 2$, the individual 
always chooses the middle of three options, much like the
sample instance with choices $A$, $B$, $C$, $D$ discussed earlier.
It is not hard to verify, using examples such as this one, 
that a comparison-based choice function exhibits choice-set 
effects if and only if $2 \leq \ell \leq k-1$.

Choice functions over $k$-sets as we define them here are very general 
mathematical functions, capable of encoding 
an arbitrary choice for every $k$-set. Comparison-based functions in one dimension,
however, are a structured subset of choice functions that provide a useful
abstraction of many decision contexts. We are not aware of any prior
inference work on general comparison-based functions, which is
surprising given the central role of binary 
comparison in the fundamental problem
of sorting. We focus on comparisons in one dimension.
Comparison functions in higher dimensions can have a significantly
more complex structure;
 extending our results to higher dimensions
is beyond the scope of the present work.

A short-coming of comparison-based functions is that they 
lack the ability to grasp ``similarity,'' an important aspect of 
choice-set effects such as similarity aversion and the decoy effect, 
and one that requires a notion of distance beyond ordinal comparison.
As a step towards extending our framework to model such effects,
we also study {\em distance-comparison-based choice functions} 
that model choices according to distance comparisons, 
possibly in high-dimensional latent spaces.

\vspace{\mynegvspace}
\xhdr{The present work: Asymptotic complexity of inference}
There are many questions that one can consider for models
of comparison-based choice functions; here we study a basic
family of problems that are inherent in any inference procedure,
and which form an interesting connection to fundamental questions in sorting.

The basic problem we study has the following structure:
we observe a sequence of choices of the form $(S,f(S))$,
and at the end of this sequence we must 
correctly report the value of $f(S)$ for all (or almost all)
$k$-sets $S$.  
The question is how few observations $(S,f(S))$ we need in order 
to achieve various measures of success.
We ask this question for different models of how the observations
are generated.  
We first consider
active queries, in which we can choose $S$ and receive
the value of $f(S)$; we investigate the potential for efficient inference both
when a single individual is making choices with a fixed function $f$,
and for population mixtures of different choice functions.
As a second model we consider passive queries, a model 
whereby a stream of pairs $(S,f(S))$ is generated uniformly at
random over possible $k$-sets $S$, without the control over $S$ offered
by the active query model.

For active queries, a natural baseline for understanding the problem 
formulation is the problem of sorting, which precisely consists of the case 
$k = 2$ and $\ell = 1$.
A comparison-based sorting algorithm asks about a sequence of pairs $S = \{u_i, u_j\}$,
and for each such pair it is told the identity of the preferred element.
An efficient sorting algorithm given $O(n \log n)$ such queries can
learn the sorted order of the elements, and thus can answer 
queries of the form $f(S)$ for arbitrary pairs after learning the sorted order.

We show that all comparison-based choice functions in one dimension 
exhibit the
same $O(n \log n)$ active query complexity: for every $\ell$ and $k$, 
there is an algorithm that can ask about a sequence of
$O(n \log n)$ $k$-sets $S$
as queries, and from the values of $f(S)$ it is then prepared to
correctly report $f(S)$ for all $k$-sets $S$.
Roughly speaking, the algorithm works by first identifying a small
set of elements that cannot be the answer to any query,
and then it uses these elements as ``anchors'' on which to
build sets that can simulate comparisons of pairs.
Note that as the set size $k$ grows, the $O(n \log n)$ queries form a 
smaller and smaller fraction of the set of all $n \choose k$ 
possible $k$-sets.

We then consider active queries for population mixtures, in which
different people use different comparison-based choice functions on
$k$-sets, and when we pose a query $S$, we get back the answer
$f(S)$ for an individual selected uniformly at random from
the population.  We show that for a fixed but unknown vector of 
probabilities for each choice function
under some natural
non-degeneracy conditions we can determine $f(S)$ for each
segment of the population after $O(n \log n)$ active queries.
The algorithm here uses random sampling ideas combined with
techniques for {\em sorting using noisy comparators}
\cite{feige1994computing}.

For passive queries, where an algorithm is
presented with a stream of values $(f(S),S)$ for randomly
selected sets $S$,
the question is how long we need to observe values from the
stream before being able to compute $f(S)$ for all (or almost all) 
$k$-sets $S$.
We show that for comparison-based choice functions $f$ that
exhibit choice-set effects (with $\ell$ between $2$ and $k-1$),
we can do this after observing $o(n^k)$ values from the stream 
with high probability; 
for an arbitrary $\eps > 0$ and $\delta > 0$, 
we can, with probability at least $1 - \delta$,
determine $f(S)$ for
at least a $1 - \eps$ fraction of all $k$-sets $S$.
Our analysis here builds on a sequence of combinatorial results
on {\em sorting in one round}, culminating in an asymptotically
tight analysis of that problem by Alon and Azar \cite{bollobas1981sorting,alon1988sorting}.

Finally, we consider how our results for comparison-based
functions apply to distance-comparison-based choice functions 
in which the geometry of the alternatives'
embedding plays a consequential role.
A range of earlier work have made use of the ambient space in 
different ways, including methods such as {\em conjoint analysis}
\cite{baltas-doyle,green-conjoint-survey}.
Here, we consider the effect of performing comparisons among the
pairwise distances between alternatives.
This consideration enables us to reason about elements that are either
{\em central} or {\em outliers}
in a comparison-based fashion, providing a plausible model
for the choice-set effect commonly known as {\em similarity aversion}.
This line of inquiry into metric embeddings also connects our results with recent 
research in the learning and crowdsourcing literature on learning
stochastic triplet embeddings 
\cite{van2012stochastic} and inferences using 
the crowd median algorithm
\cite{heikinheimo2013crowd}.

\section{Active query complexity}

We will first focus on {\em active} query algorithms
that may choose queries sequentially based 
on the results of previous queries, 
and our goal here is to develop algorithms that
after performing a small number of queries
can take an arbitrary $k$-set $S \subseteq U$ 
from a universe of $n$ alternatives and correctly output $f(S)$.

We have previously noted that every
comparison-based choice function over
a one-dimensional ordering takes 
the form of a position-selecting choice function 
choosing the $\ell^{\rm th}$ alternative of $k$
for some ordering; we will denote this function by $q_{k,\ell}$.
Our results in this section show that for any fixed $k\ge 2$ 
and any fixed but unknown $\ell \in \{ 1, ..., k\}$, we can in fact
learn the output for any input using an efficient two-phase algorithm 
that performs only $O(n \log n)$ queries.
This algorithm determines $\ell$ (up to a reflection we discuss below) 
by the time it terminates,
but we also give a simple algorithm that recovers $\ell$ (up to the same reflection)
more directly in just $O(1)$ queries (without learning the output for all inputs).
Thus, we establish that learning the position $\ell$ of a comparison-based
choice function does not require running a comprehensive recovery algorithm. 

For recovering the choice function, we note that the
orientation of the embedding is inconsequential.
This symmetry is clear when considering the simplest example of $k=2$, where
$f$ is either a max selector $q_{2,1}$ or a min selector $q_{2,2}$.
In order to deduce $f(S)$ for every $S$ we needn't 
know whether $f$ is choosing the max or min, as we will simply
learn an embedding aligned with our selector.
More generally, we have no way or need to distinguish between
$q_{k,\ell}$ over a given embedding and 
$q_{k,k-\ell+1}$ over the same embedding reversed,
as they result in the exact same choices.

\vspace{\mynegvspace}
\xhdr{Recovering choice functions}
The algorithm we propose consists of two phases. In the first phase,
we use $O(n)$ queries to identify a set of ``ineligible'' alternatives. 
In the second phase, we use those ineligible alternatives to 
``anchor'' the choice set down to a binary comparison
between two eligible alternatives, allowing us to 
determine the ordering of the eligible alternatives 
in $O(n \log n)$ queries using comparison-based sorting.

\begin{theorem}
Given a choice function $f$ over $k$-sets $S$ in a 
universe of $n$ alternatives, 
there is an algorithm that 
recovers $f$ after $O(n \log n)$ queries,
meaning that after this set of queries it can output $f(S)$ for any $S$. 
\end{theorem}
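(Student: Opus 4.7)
The plan is a two-phase algorithm. In the first phase we determine the position $\ell$ (up to the reflection $\ell \leftrightarrow k-\ell+1$ noted in the excerpt) and identify the $k-1$ \emph{ineligible} alternatives, namely the $\ell-1$ globally smallest and the $k-\ell$ globally largest elements under the embedding $h$. In the second phase we use those ineligibles as anchors to synthesize a pairwise comparator and sort the remaining $n-(k-1)$ eligible elements via any standard comparison sort.

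\textbf{Phase 2.} Suppose Phase 1 returns a low anchor set $L$ with $|L|=\ell-1$ and a high anchor set $H$ with $|H|=k-\ell$. For any two eligibles $u_i,u_j$, pick $H'\subseteq H$ with $|H'|=k-\ell-1$ and query $f$ on $S=L\cup\{u_i,u_j\}\cup H'$ (with the symmetric adjustment when $\ell\in\{1,k\}$). In the sorted order of $S$, $L$ fills positions $1,\ldots,\ell-1$ and $H'$ fills positions $\ell+1,\ldots,k$, so $f(S)\in\{u_i,u_j\}$ is whichever of the two is smaller under the orientation we are recovering. A single $f$-query therefore realises one pairwise comparison, and merge-sort yields the sorted order of all eligibles in $O(n\log n)$ queries. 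To answer $f(S)$ on an arbitrary query set, observe that if $S$ contains $l_S$ low anchors and $h_S$ high anchors then $l_S\le \ell-1<\ell$ and $h_S\le k-\ell$, so the $\ell^{\rm th}$ smallest of $S$ always lies among its eligibles and can be read directly off the recovered order together with $l_S$.

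\textbf{Phase 1 (the main obstacle).} To recover $\ell$ and the ineligibles using only $q_{k,\ell}$ queries — without a pairwise comparator yet — I would begin by querying all $k+1$ of the $k$-subsets of an arbitrary fixed $(k+1)$-element set $T=\{e_1,\ldots,e_{k+1}\}$ (listed in the unknown sorted order). A short case analysis shows that exactly two answer values appear: $e_\ell$ with multiplicity $k+1-\ell$, and $e_{\ell+1}$ with multiplicity $\ell$. The multiplicities recover $\{\ell,k+1-\ell\}$ (hence $\ell$ up to reflection), and the query-to-answer mapping partitions $T$ into its $\ell$ smallest and $k+1-\ell$ largest. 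I would then process each remaining $x\in U\setminus T$ with $O(1)$ queries that pit $x$ against the current candidate boundaries (the present largest-of-low and smallest-of-high, padded by enough classified elements to form valid $k$-sets that put $x$ and the boundary at positions $\ell$ and $\ell+1$), classifying $x$ as below the current low set, between, or above the current high set, and updating the candidate sets. Constant $k$ yields $O(n)$ queries total.

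The technical heart is Phase 1: once anchors exist, Phase 2 is essentially classical comparison sorting, but bootstrapping the anchor sets from a $q_{k,\ell}$ oracle is delicate because the oracle reveals nothing about the internal order within the bottom $\ell-1$ or top $k-\ell$ elements of any tested set. The main subtlety will be designing the per-element update so that $x$ and the chosen boundary element always land at the decisive ranks $\ell$ and $\ell+1$ of the queried $k$-set; a careful case distinction on whether the low and high candidate lists are already full or still being populated suffices. Combining the phases and handling the corner adjustments when $\ell\in\{1,k\}$ (where one anchor side is used instead of both, and $f$ is essentially a $k$-way min/max oracle amenable to tournament sorting), the total is $O(n)+O(n\log n)=O(n\log n)$ queries, after which $f(S)$ can be reported for any $S$.
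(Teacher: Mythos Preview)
Your two-phase architecture matches the paper's, and Phase~2 is essentially the same idea (anchored pairwise comparisons followed by comparison sorting). The paper's Phase~2 is slightly slicker: it takes \emph{any} $k-2$ of the $k-1$ ineligibles as the anchor set $S^{-2}$, without separating low from high; the resulting comparator is then either $\min$ or $\max$ depending on which ineligible was omitted, but this is immaterial for sorting, and $\ell$ is recovered only at the very end with a single extra query.

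Where you differ substantially is Phase~1. The paper does not determine $\ell$ first and then classify each element against moving boundaries. Instead it runs a one-line \emph{discard algorithm}: start with an arbitrary $k$-set, query it, discard $f(S)$, add a fresh element, and repeat. After $n-k+1$ queries every element of $U$ has passed through the current set, and the $k-1$ survivors are precisely the ineligibles (an ineligible is never chosen and hence never discarded). This requires no knowledge of $\ell$, no low/high bookkeeping, and no boundary maintenance. Your Phase~1 can be made to work --- the single query $S=L\cup\{x\}\cup H$ returns $x$ when $x$ lies between the current extremes, and otherwise returns (and thereby \emph{reveals}) exactly the boundary element that $x$ displaces --- but you should say explicitly that this is how ``largest-of-low'' and ``smallest-of-high'' are learned, since after each displacement they are not known a priori. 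The paper's discard algorithm sidesteps all of this and is the cleaner argument; your version has the compensating advantage that it classifies the ineligibles as low versus high directly, which is genuinely needed to report $f(S)$ when $S$ contains ineligibles (a point the paper passes over but which is fixable with $O(k)$ additional queries).
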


We prove this theorem by describing the algorithm together with
its analysis, divided into its two phases of operation.
For the first phase, we identify the ineligible elements.
In the second phase, we simulate pairwise comparisons.

\begin{figure}[t]
\centering
\includegraphics[width =\myfigwidth]{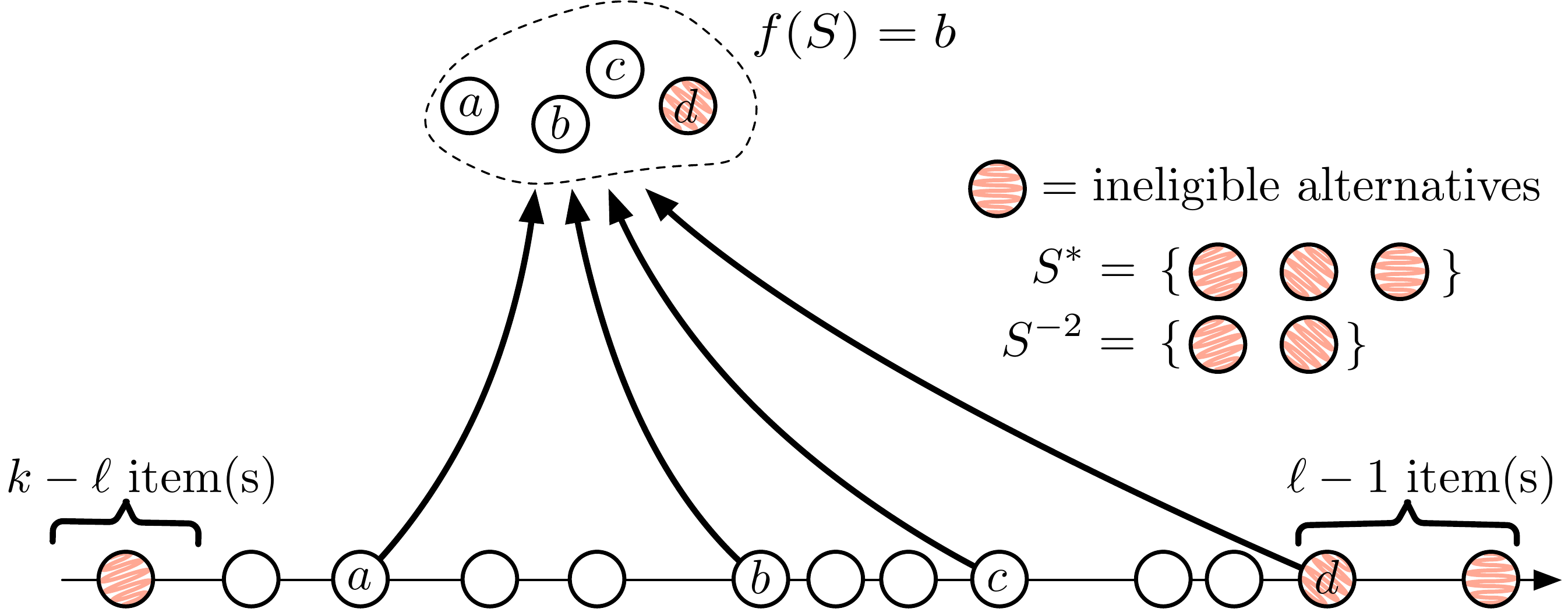}
\vspace{-3pt}
\caption{
For a given comparison-based choice function $f$ over an unknown embedding,
there are ineligible alternatives that will never be chosen,
illustrated here for $k=4$ and $\ell=3$. The elements are embedded in a $1$-dimensional 
space, where the $\ell-1$ maximal and $k-\ell$ minimal elements can 
never be chosen. 
}
\vspace{0pt}
\label{f:fig1}
\end{figure}

\begin{proof}
We begin the first phase by observing that for a position selector $q_{k,\ell}$ making 
comparison-based choices over $k$-sets $S \subseteq U$ with $|U|=n$,
there are $k-1$ alternatives that will never be chosen.
For max-selectors $q_{k,k}$ these are the $k-1$ minimal alternatives,
for min-selectors $q_{k,1}$ these are the $k-1$ maximal alternatives, and for 
general $q_{k,\ell}$ these are the $\ell-1$ maximal and $k-\ell$ minimal alternatives
from the embedded order.

To find these elements, we run a simple {\em discard algorithm}.
Commencing with $k$ arbitrary initial alternatives in a choice set $S_1$, 
we query the choice function $f$ and learn the choice $f(S_1)$.
We then construct our next choice set by discarding the previous choice
and selecting a new arbitrary alternative $u \in U \setminus S_1$ to form
$S_2 = (S_1 \setminus  f(S_1)) \cup u $. We query for $f(S_2)$, and
repeat this discard procedure for $n-k+1$ queries. 
After the last query we will have exhausted $U$ and learned that 
$S^* = S_{n-k+1} \setminus f(S_{n-k+1})$ are precisely the $k-1$ alternatives
in $U$ that can never be selected by $f$.
Having found the $k-1$ ineligible alternatives $S^*$, we 
arbitrarily select $k-2$ alternatives from $S^*$ to form a set $S^{-2}$.
See Figure 1 for an illustration of this procedure. 

In the second phase of the algorithm we use the set $S^{-2}$
constructed above as padding for a choice set
construction with two open positions: notice that for any two
alternatives $u_i$ and $u_j$, 
a query for $f(\{u_i, u_j\} \cup S^{-2})$ will amount to binary comparison
between $u_i$ and $u_j$.
If $\ell=1$ or $\ell=k$, we know that this will be a max or a min selection,
respectively. However, for intermediate values of $\ell$ we do not
yet know whether such binary comparisons will be a max or a min selection, 
as it depends on what $k-2$ elements from $S^*$ were chosen
to construct $S^{-2}$. 
There exists no basis for choosing $S^{-2}$ from $S^*$ in a
non-arbitrary way, as we have no way of
ever choosing any of the alternatives in $S^{-2}$.
However,  as noted earlier in the case of 
recovering choice functions for $k=2$,
we can simply assume that the comparator we have constructed
is selecting the max of $u_i$ and $u_j$, 
and then learn the order of the eligible alternatives
 from binary comparisons as oriented by that comparator. 
After $O(n \log n)$ padded comparison queries, 
we obtain a sorted order of the eligible alternatives.

To conclude this second phase, 
we perform a single additional query to determine $\ell$, the position
being selected,
which is as yet unknown and unused. 
By querying for what we've come to 
suppose are the first $k$ eligible elements, 
the chosen alternative will identify the $\ell^{\rm th}$ position
in the recovered embedding of the eligible alternatives.
We again emphasize that we 
have no way of distinguishing between
$q_{k,\ell}$ over a given embedding and 
$q_{k,k-\ell+1}$ over the same embedding reversed,
but the direction of the ordering is nonessential, as
the choices are identical for all choice sets. 
Having learned the embedded order of the eligible
alternatives in $U$ as well as the 
position $\ell$ being selected,
we can now discern the choice $f(S)$ for any $k$-set $S \subseteq U$.
\end{proof}

As a further point, it is notable that no part of this algorithm 
depends on $\ell$, as it is only learned at the end of the final query,
meaning that the entire algorithm is in fact 
indifferent to whether an individual is selecting 
minima, maxima, or intermediate positions.

Lastly, we observe that the classic lower bound on sorting applies 
as a lower bound here as well, for fixed $\ell$ and $k$: 
there are $(n-k+1)!/2$ possible permutations of how the eligible alternatives
can be embedded (ignoring the $k-1$ ineligible alternatives). 
Each query cements at most $k-1$ 
relative orderings, at most reducing the number of feasible orderings by 
a factor of $k-1$. We thus require at least $\log_k((n-k)!/2)$
queries to learn a choice function, yielding 
a lower bound of $\Omega(n \log n)$, as for sorting. 

\vspace{\mynegvspace}
\xhdr{Type classification}
It can be valuable to learn how a choice function 
makes comparisons without necessarily learning the ordering 
implicated in the comparisons.
Here we establish that we can learn $\ell$, up to the noted reflection equivalence, 
using $O(k)$ queries without learning the order of the alternatives. 

\begin{proposition}
By querying for all $k$-set subsets of an arbitrary $(k+1)$-set,
we can determine $\ell$, up to reflection.
\end{proposition}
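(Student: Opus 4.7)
The plan is to fix an arbitrary $(k+1)$-set $T \subseteq U$ and query $f$ on each of the $k+1$ different $k$-subsets obtained by deleting one element of $T$. Label the elements of $T$ in the order induced by the (unknown) embedding as $u_{(1)} < u_{(2)} < \cdots < u_{(k+1)}$, and let $T_j = T \setminus \{u_{(j)}\}$. The key observation is that deleting an element above the $\ell$-th position does not change the identity of the $\ell$-th ranked element in $T$, whereas deleting an element at or below the $\ell$-th position shifts the $\ell$-th ranked element up by one.

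Concretely, I would show by direct case analysis that
\[
f(T_j) \;=\; \begin{cases} u_{(\ell+1)} & \text{if } j \le \ell, \\ u_{(\ell)} & \text{if } j > \ell, \end{cases}
\]
so across the $k+1$ queries the algorithm observes exactly two distinct chosen elements: one appearing $\ell$ times and the other appearing $k+1-\ell$ times. Since the algorithm sees the raw answers, it can simply tally how many times each chosen element appears, producing the multiset of frequencies $\{\ell,\, k+1-\ell\}$. Note that $k+1-\ell = k-\ell+1$, so this multiset is exactly the equivalence class $\{\ell, k-\ell+1\}$ under the reflection $\ell \mapsto k-\ell+1$ identified earlier in the paper. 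Thus $\ell$ is determined up to reflection, which matches the claim.

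The main conceptual subtlety, and the step that needs to be handled cleanly, is that we do not know the embedding order on $T$ when we pose the queries, and we do not learn it from the answers either: we only observe which concrete alternatives are returned. But the argument sidesteps this entirely because the frequency multiset $\{\ell, k+1-\ell\}$ is an invariant of the query outcomes, not of the embedding labels. One small boundary case to mention is $\ell = 1$ or $\ell = k$, where $f(T_j)$ still takes only two values and the counting argument goes through identically; another is the self-reflective case $k$ odd with $\ell = (k+1)/2$, in which both frequencies equal $(k+1)/2$ and the reflection equivalence class is a single value, so $\ell$ is recovered exactly. The total query count is $k+1 = O(k)$, matching the $O(k)$ bound stated in the preceding discussion, and no sorting or embedding information is needed along the way.
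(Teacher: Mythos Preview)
Your proposal is correct and follows essentially the same approach as the paper's own proof: both query all $k$-subsets of an arbitrary $(k+1)$-set, observe that the outputs take exactly the two values $u_{(\ell)}$ and $u_{(\ell+1)}$ with multiplicities $k-\ell+1$ and $\ell$ respectively, and read off $\ell$ (up to reflection) from this frequency multiset. Your explicit case formula for $f(T_j)$ and your handling of the boundary and self-reflective cases are slightly more detailed than the paper's version, but the argument is the same.
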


\begin{proof}
The algorithm performs $k+1$ queries for all the subsets of size $k$, 
where each of these subsets can be defined by what element is excluded. 
The $k+1$ alternatives have an unknown internal ordering, 
and we let $u_i$ denote the $i^{\rm th}$ alternative in order.
From these queries, there can be only two possible elements selected 
as the output across the $k$ different subsets: 
either the chosen alternative in the $(\ell+1)^{\rm th}$ ordered position 
(when the excluded element is one of $u_1,\ldots,u_\ell$) 
or the chosen alternative in the  $\ell^{\rm th}$ ordered 
(when the excluded element is one of $u_{\ell+1},\ldots,u_{k+1}$).
Of these $k+1$ $k$-sets, the $(\ell+1)^{\rm th}$ element will be chosen $\ell$ times, 
and the $\ell^{\rm th}$ element will be chosen $k-\ell+1$ times. 

Thus, for any $k$, in order to determine the position $\ell$ of
a position-selecting choice function, one can
simply take an arbitrary set of $k+1$ elements 
and query for all subsets of $k$. 
Then $\ell$ (up to reflection) is the frequency of the less frequent 
of the two response elements.
\end{proof}

%%%%%%%%%%%%%%%%%%%%%%%%%%%%%%%%%%%%%%

\section{Population mixtures}

The results in the previous section assumed that all queries were evaluated by the
same deterministic choice function $f$, where $f(S)$ was consistently 
selecting the $\ell^{\rm th}$ ordered alternative from within each $k$-set $S$.
In this section, we show that we can recover choice functions 
in a more general setting, where the position $\ell$ chosen
by $f$ is drawn from a distribution over possible positions. 

This setting covers two generalizations of our active query results.
First, it describes situations where a single individual is being
 repeatedly queried, and may exhibit compromise effects 
for some queries, independently at random. Second,
it describes situations where the queries are handled by
a population of individuals that all base their choices on the
same ordered embedding, but differ in what position
within an ordering that their choice functions select for. 
We show that we can still recover choice functions in this
setting using $O(n \log n)$ queries, almost surely.

Recalling one of our motivating examples from the introduction, this
is a plausible reality: faced with a choice of restaurants
ranging from ``too bland'' to ``too exotic,'' some individuals 
will compromise, while some will optimize for one of the
extremes. In online settings the choices being made often
come from heterogeneous populations, and we wish 
to develop an algorithm that can still recover choices in these
settings.

We define a {\em mixed choice function}
as a comparison-based choice function on $k$-sets $S \subset U$
where the $\ell^{\rm th}$
ordered element in $S$ is selected independently at random
with probability $\pi_\ell$. 
A mixed choice function is then completely 
defined by an ordering of $U$ and a probability 
distribution $(\pi_1,...,\pi_k)$ over position-selecting choice functions
with $\sum_{i=1}^k \pi_i = 1$.
To avoid degeneracies, we require
that $\pi_\ell > 0$, $\forall \ell$,
and also require constant separation
between the probabilities,
$|\pi_\ell- \pi_{\ell'}| > \gamma$, $\forall \ell,\ell'$ with $\ell \ne \ell'$, 
for some constant $\gamma>0$. 

In its most basic instance, a mixed choice function 
over $2$-sets is simply a noisy binary comparator over an ordering,
with $(\pi_1,\pi_2)=(p,1-p)$ for some probability $p$. 
In this case, 
results from the sorting literature 
contribute that as long as $p$ is bounded away from
$1/2$, the order can be recovered in $O(n \log n)$ queries
almost surely \cite{feige1994computing}. 
Our contribution in this section is to 
generalize this result to arbitrary mixtures of comparison-based choice functions.

We begin by showing that for any $\epsilon >0$, 
we can recover the mixture probabilities $(\pi_1,...,\pi_k)$
of a mixed choice function with probability at least
$1-\epsilon$ in a number of queries that is $O(1)$ 
in $n$, the size of $U$. We then
show that we can indeed recover any 
mixed choice function
$f$ using $O(n \log n)$ queries with probability
at least $1-\epsilon$.

\vspace{\mynegvspace}
\xhdr{Recovering mixture probabilities} 
We begin by showing that we can recover the mixture probabilities using a 
modified version of the algorithm we presented for recovering $\ell$
in an active query framework.

\begin{theorem}
\label{mixtureprop1}
Let $f$ be a mixed comparison-based choice function over $k$-sets in a universe of $n$ alternatives.  Let $\pi = ( \pi_1,...,\pi_k)$ be the mixture distribution of $f$, and let $\pi_\ell > 0$, $\forall \ell$, and $|\pi_\ell- \pi_{\ell'}| > \gamma$, $\forall \ell,\ell'$ with $\ell' \ne \ell$, for some constant $\gamma>0$. 

For any $\epsilon>0$ and $\delta \in (0,\gamma/2)$ there exists a constant $C>0$ such that an algorithm using $C$ 
queries can recover $\pi$, meaning that it outputs a probability vector $p$ for which 
$$Pr(max_\ell |\hat \pi_\ell - \pi_\ell | \le \delta) \ge 1 - \epsilon$$ 
holds for either $\hat \pi = (p_1,...,p_k)$ or $\hat \pi = (p_k,...,p_1)$.
\end{theorem}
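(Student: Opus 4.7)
The plan is to adapt the type-classification idea from the preceding proposition to the stochastic setting: in place of a single query to each $k$-subset of an arbitrary $(k+1)$-set, we repeat each query many times and use the empirical frequencies to jointly estimate the mixture distribution $\pi$ and the internal ordering of the $(k+1)$-set (up to the reflection ambiguity). Concretely, fix an arbitrary $(k+1)$-subset $T = \{w_1, \ldots, w_{k+1}\}$ of $U$ and, for each $i \in \{1,\ldots,k+1\}$, let $S^{(i)} = T \setminus \{w_i\}$. Query each $S^{(i)}$ independently $M = O(\log(k/\epsilon)/\delta^2)$ times, recording the empirical frequency $\hat p^{(i)}(w)$ with which each $w \in S^{(i)}$ is selected. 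By Hoeffding's inequality and a union bound over the $O(k^2)$ pairs $(i,w)$, we can choose the hidden constant so that, with probability at least $1-\epsilon$, every $\hat p^{(i)}(w)$ lies within $\delta$ of its mean $p^{(i)}(w)$; the rest of the argument then proceeds deterministically on this good event, using only $\delta<\gamma/2$.

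The structural observation driving the argument is that if the unknown ordering of $T$ is $u_1<u_2<\cdots<u_{k+1}$, then $p^{(i)}(u_j)$ equals $\pi_{j-1}$ whenever $S^{(i)}$ removes an element ranked below $u_j$ and equals $\pi_j$ whenever it removes an element ranked above $u_j$. Thus $u_1$ and $u_{k+1}$ have constant means $\pi_1$ and $\pi_k$ across all $k$ subsets containing them, while every middle element $u_j$ takes two distinct mean values differing by at least $\gamma$. Because $\delta<\gamma/2$, the two extreme elements are recovered as the unique pair of $T$-elements whose empirical probabilities agree to within $\gamma/2$ across all subsets containing them; designating one as $u_1$ and the other as $u_{k+1}$ is the arbitrary choice that produces the reflection ambiguity appearing in the theorem's conclusion. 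We set $\hat\pi_1$ and $\hat\pi_k$ to the corresponding mean empirical probabilities of these extremes.

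With $u_1$ and $u_{k+1}$ identified, we order the middle elements by anchoring on the reference subset $S^{*}=T\setminus\{u_1\}$, in which $u_j$ has subset-rank $j-1$ and hence mean $\pi_{j-1}$. For $j=2,3,\ldots,k$, once $u_1,\ldots,u_{j-1}$ and the estimates $\hat\pi_1,\ldots,\hat\pi_{j-1}$ have been recovered, we identify $u_j$ as the element of $T\setminus\{u_1,\ldots,u_{j-1},u_{k+1}\}$ whose empirical frequency in $S^{*}$ is closest to $\hat\pi_{j-1}$; the alternative candidates $u_{j+1},\ldots,u_k$ have $S^{*}$-means $\pi_j,\ldots,\pi_{k-1}$, each $\gamma$-separated from $\pi_{j-1}$, so the $\delta$-accuracy suffices for correctness. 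We then set $\hat\pi_j$ equal to the empirical probability of $u_j$ in $T\setminus\{u_{k+1}\}$, where $u_j$ has subset-rank $j$ and mean $\pi_j$. After $k-1$ such iterations the vector $\hat\pi=(\hat\pi_1,\ldots,\hat\pi_k)$ (or its reverse, depending on the earlier labeling of extremes) is within $\delta$ of $\pi$ coordinate-wise, establishing the theorem. The main delicate point is arranging the inductive step so that each identification reduces to a simple closest-point comparison whose correctness follows from $\gamma$-separation alone: anchoring throughout on the fixed reference subset $T\setminus\{u_1\}$ and always matching against the previously recovered $\hat\pi_{j-1}$ is what permits this without any further assumptions on $\pi$.
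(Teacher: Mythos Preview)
Your approach is essentially the same as the paper's: fix an arbitrary $(k+1)$-set, repeatedly query each of its $k$-subsets, apply a concentration bound plus a union bound over the $O(k^2)$ empirical frequencies, and then run a deterministic alignment procedure on the good event. Your alignment differs in flavor from the paper's---you identify the extreme \emph{elements} $u_1,u_{k+1}$ as those with constant empirical frequency across subsets and then peel off $u_2,u_3,\ldots$ via the fixed reference subset $T\setminus\{u_1\}$, whereas the paper ranks elements by frequency \emph{within} each subset and identifies the extreme \emph{positions} $\pi_1,\pi_k$ by the combinatorial pattern that at those frequency ranks one element appears once and another appears $k$ times across the $k+1$ subsets---but both alignments work.

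There is one quantitative slip. Your criterion for detecting extremes is that their empirical frequencies ``agree to within $\gamma/2$'' across subsets, and your inductive step is a closest-point comparison between a candidate at distance at most $2\delta$ and competitors at distance at least $\gamma-2\delta$. Both require $2\delta<\gamma-2\delta$, i.e.\ $\delta<\gamma/4$, not merely $\delta<\gamma/2$; with $\delta$ near $\gamma/2$ an extreme element can have empirical spread close to $\gamma$ while a middle element can have spread close to $0$, so your test can misfire. The fix is trivial---estimate internally to accuracy $\min(\delta,\gamma/5)$, which is still $O(1)$ queries in $n$---but the sentence ``the $\delta$-accuracy suffices for correctness'' is not quite right as stated. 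The paper's rank-based alignment sidesteps this factor-of-two loss because it only ever compares frequencies within a single subset, where $\delta<\gamma/2$ genuinely suffices to recover the order.
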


\begin{proof}
Our strategy is to study a single $(k+1)$-set $S^+$ closely, and by querying each $k$-set within $S^+$
sufficiently many times we can recover the mixture distribution with the required precision. 
We index the $k+1$ $k$-sets within $S^+$ as  $S_1,...,S_{k+1}$.

We define the random variables 
\begin{align}
X_{i,j}^u = 
\begin{cases}
1 \text{ if query } i \text{ of subset } S_j \text{ returns } u \in S_j  \\
0 \text{ otherwise,}
\end{cases}
\end{align}
where $\mathbb E [ X_{i,j}^u ] = \pi^j_u$ for a 
vector of probabilities $\pi^j = (\pi^j_1,...,\pi^j_k)$ 
that is an unknown permutation of $(\pi_1,...,\pi_k)$.

We first show that for each subset 
we can recover the complete set of probabilities
with the specified precision and correctly
ordered by their relative frequency. 
In a second stage we will 
then align the permuted $\pi^j$
 vectors and thereby recover 
$\pi$ itself (or $\pi$'s reflection).

For each $S_j$, we consider the estimates $\hat \pi^j_u = \frac{1}{C} \sum_{i=1}^{C} X^u_{i,j}$, and we show that this is correct to within the requested error tolerance if we choose $C$ large enough.
Recalling that $\delta < \gamma / 2$,
a two-sided Chernoff bound tells us that:
\begin{align*}
\Pr \left ( \left | \frac{1}{C} \sum_{i=1}^{C} X_{i,j}^u - \pi^j_u \right | \ge \delta^j_u \pi^j_u \right ) 
\le 2\exp \left (
- \frac{ (\delta^j_u)^2 }{2+\delta^j_u} C \pi^j_u
\right ),
\end{align*}
for which we can set $\delta^j_u=\delta/\pi^j_u$ and use $\pi^j_u \le 1$, $\forall j,u$ 
to obtain
\begin{align*}
\Pr \left ( | \hat \pi^j_u - \pi^j_u  | \ge \delta \right ) 
\le 2\exp \left ( -
\frac{ \delta^2}{2 + \delta } C
\right ).
\end{align*}
Let us define the bad events $\mathcal E_{j,u} = \{ | \hat \pi^j_u - \pi^j_u  | \ge \delta \}$,
when we failed to recover $\pi_u^j$ (the probability of selecting $u$ from subset $S_j$)
within the specified precision. 
By taking the Union Bound across the sets $S_j$ and their alternatives $u$, 
we can set $C>\frac{2+\delta}{\delta^2} \log(\frac{\epsilon}{2k(k+1)})$ to obtain
\begin{align*}
\Pr  \left ( \max_{u,j} \left | \hat \pi^j_u - \pi^j_u \right | \le \delta  \right ) & \ge
1 - \sum_{j=1}^{k+1}  \sum_{u=1}^k \Pr  \left ( \left | \hat \pi^j_u - \pi^j_u \right | \ge \delta \right ) 
> 1- \epsilon.
\end{align*}
The remaining challenge is
to determine what probability corresponds to what choice position,
and we will show
that recovering this correspondence is guaranteed if the bad events did not occur.

 We recover the correspondence using an alignment procedure: 
 since $\delta < \gamma/2$,
inside each $S_j$ we can order the elements from most frequently
selected to least frequently selected. In this order the $i^{\text{th}}$
most frequent choice inside each vector $\hat \pi^j$ 
will correspond to the same underlying position selection for each $S_j$. 
The most frequently selected element from each $S_j$
will have been selected by the $\ell$-selector with the largest $\pi_\ell$ in $\pi$,
and so on, for each selection frequency.

To discern what frequency corresponds to what $\ell$,
we look across the subsets $S_1,\ldots,S_{k+1}$ and identify the 
two probabilities $\pi_1$ and $\pi_k$ by the fact that they will
be the only two frequencies that have
%$\{f(S_1), f(S_2), \ldots,f( S_{k+1} )\} = \{u, v, ...., v\}$, 
selected some element $u$ only once at that frequency, 
and another element $v$ the remaining $k$ times at that frequency.
In other words, 
the alternatives selected at those frequencies will have the structure $\{u, v, ...., v\}$
for some $u$ and $v$, and these must be the frequencies $\pi_1$ and $\pi_k$. 
See Figure S1 for an illustration in the case of $k=3$. 
We arbitrarily assign the more frequent of these two
frequencies to be $\pi_1$, recalling that we're only 
looking to recover $\pi$ up to reflection. The other frequency
is assigned to be $\pi_k$.

Lastly, for each of these two end frequencies $\pi_1$ and $\pi_k$
we take the ``other'' element $v$ (that made up $k-1$ of the $k$ choices at 
that frequency), and find what frequency corresponds to 
when each $v$ was selected exactly twice, 
identifying the probabilities $\pi_2$ and $\pi_{k-1}$. 
Since $\pi_\ell>0$ for all $\ell$, we can
continue this overlap procedure for all the frequencies,
allowing us to identify the probability $\pi_\ell$ corresponding
to each position $\ell$. 
This alignment procedure fails
only if one of the earlier bad events failed, and so we can return any one
of the $\hat \pi^j$, which all had the necessary precision, 
and reorder it by our alignment procedure
to produce our output $p$.
\end{proof}

\begin{figure}[t]
\centering
\includegraphics[width =0.8\myfigwidth]{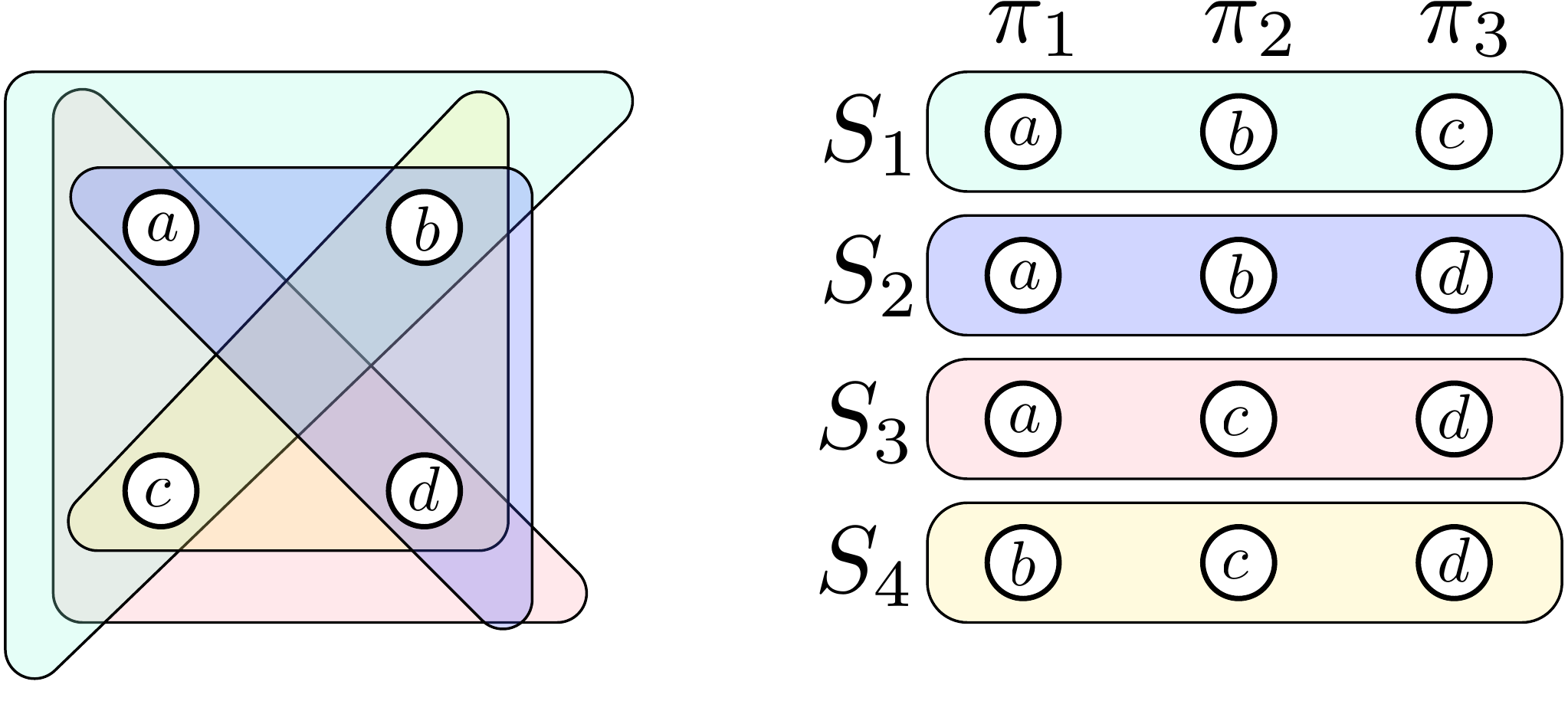}
\vspace{-3pt}
\caption{
A graphical sketch of the population alignment algorithm in Theorem 2, 
shown for $k=3$. 
Left: a $4$-set contains four $3$-sets, $S_1, S_2, S_3, S_4$. Right: 
by querying for each $3$-set sufficiently often, we can
identify the frequencies $\pi_1, \pi_2, \pi_3$ with which each element occurs
and then identify
$\pi_1$ and $\pi_3$ by the fact that their frequencies corresponds to the 
same element in all but one of the $3$-sets.
}
\label{f:fig2}
\end{figure}

\vspace{\mynegvspace}
\xhdr{A recovery algorithm for mixtures}
We now show that in this setting where queries
are handled by a mixed choice function---either representing
a random individual or a random mixture of individuals---we
can recover the choice function in $O(n \log n)$ queries
with high probability, matching the asymptotic query complexity 
of the non-population case for any fixed error probability.
We run our recovery algorithm
against the mixture population, but for the purposes
of recovering $f(S)$ for every $S$ 
we assume that $\ell$ is then known. 

\begin{theorem}
Let $f$ be a mixed comparison-based choice function over $k$-sets in a universe of $n$ alternatives.  Let $\pi = ( \pi_1,...,\pi_k)$ be the unknown mixture distribution of $f$, and let $\pi_\ell > 0$, $\forall \ell$, and $|\pi_\ell- \pi_{\ell'}| > \gamma$, $\forall \ell,\ell'$ with $\ell \ne \ell'$, for some constant $\gamma>0$. 

Let $f_1,...,f_k$ be pure versions of $f$, where $\pi_\ell=1$ for $f_\ell$, 
over the same ordered embedding.
For any $\epsilon>0$
there exists
an algorithm using $O(n \log n)$ queries to $f$ that
with probability at least $1-\epsilon$
can recover 
$f_\ell(S)$ for all $\ell$, all $S$.
\end{theorem}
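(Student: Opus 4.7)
My plan is to reduce the mixture recovery problem to sorting under the noisy comparator model of Feige et al.~\cite{feige1994computing}, using the preceding result to bootstrap the analysis. First I would invoke Theorem~\ref{mixtureprop1} with precision $\delta < \gamma/2$ and failure probability $\epsilon/3$ to recover estimates $\hat\pi$ of the mixture distribution in a constant number of queries (independent of $n$). The $\gamma$-separation guarantees that each coordinate of $\pi$ (up to the inherent reflection) is uniquely identified by its closest match in $\hat\pi$.

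Next, I would identify an ``anchor'' padding set $F \subseteq U$ consisting of the $k-2$ smallest alternatives in the embedding (up to the reflection). To do so, I would adapt the discard algorithm from the proof of the active-query recovery theorem: maintain a running $k$-set, query it $C = O(\log n)$ times, and use alignment against $\hat\pi$ to identify its maximum element with probability at least $1 - \epsilon/(3n)$; then swap that maximum out for a fresh element of $U$. After $n-k+1$ rounds the surviving $k-2$ elements of $F$ are the smallest in $U$ with probability at least $1 - \epsilon/3$, at a total cost of $O(n \log n)$ queries.

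With $F$ in hand, I would simulate an $O(1)$-expected-query biased binary comparator for any pair $u_i, u_j \in U \setminus F$: query $f(\{u_i, u_j\} \cup F)$, and requery whenever the reply lies in $F$. Since $F$ sits below both $u_i$ and $u_j$, the pair occupies positions $k-1$ and $k$ in the padded set, so conditional on the reply being one of $\{u_i, u_j\}$ the probability of returning the smaller of the two is $\pi_{k-1}/(\pi_{k-1}+\pi_k)$, bounded away from $1/2$ by at least $\gamma/2$, with the sign of the bias read off from $\hat\pi$. Feeding this comparator into the noisy sorting algorithm of Feige et al.~recovers the sorted order of $U \setminus F$ in $O(n \log n)$ comparator calls, hence $O(n \log n)$ queries to $f$, with failure probability at most $\epsilon/3$. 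A final $O(1)$-cost step inserts the $k-2$ elements of $F$ into the sorted order using sorted elements of $U \setminus F$ as replacement padding (now invoking the $\gamma$-separation between $\pi_1$ and $\pi_2$ instead). A union bound over the three error events yields the claimed $1-\epsilon$ overall success probability, and because the embedding's reflection is immaterial, the recovered sorted order determines $f_\ell(S)$ for all $\ell$ and all $S$.

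The main obstacle is producing a comparator that costs only $O(1)$ expected queries per call; a naive implementation that just repeats a noisy query until high confidence would inflate the total to $O(n \log^2 n)$. The resolution is that once $F$ is identified as extremal, every pair $u_i, u_j$ sits at the fixed positions $k-1$ and $k$ of the padded set, so a single retry-on-padding query delivers a biased coin whose sign and magnitude are determined by $\hat\pi$ and by the $\gamma$-separation, exactly matching the interface required by the noisy-sort analysis of Feige et al.
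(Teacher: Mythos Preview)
Your proposal is correct and follows essentially the same route as the paper: estimate $\pi$ via Theorem~\ref{mixtureprop1}, run a repeated-query discard algorithm to locate the extreme (max-ineligible) elements, pad with those to build a biased binary comparator fed into the Feige et al.\ noisy-sort, and finish with a second padding pass to place the remaining extreme elements. The only spots where the paper is slightly more careful are (i) the discard yields $k-1$ ineligible elements (not $k-2$), from which $k-2$ are selected for padding, and (ii) it explicitly applies Markov's inequality to convert the $O(1)$ expected retries per comparator call into an $O(n\log n)$ query budget, allotting a separate $\epsilon/5$ of the error to that event.
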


Our proof is constructive. The strategy for this 
algorithm is to first run our procedure from Theorem 2
for recovering $\pi$ using a constant number of queries.
We then focus on max-selectors, and run a modified 
discard algorithm (from Theorem 1) to identify the alternatives 
that are ineligible to the subpopulation of max-selectors. 
We use these ineligible alternatives for the max-selectors to create
an anchored $k$-set that can furnish a noisy binary comparator. 
By employing results for sorting under 
noisy comparisons \cite{feige1994computing}, we obtain
the order over the alternatives that are eligible to a max-selector.
If we only needed to recover $f_k(S)$ for every $S$, where $f_k$ is
the max-selector, we would be done. In order to also
learn the order of the max-ineligible alternatives, we run a second
sort using a query anchored with min-ineligible alternatives, thereby
also recovering the order of the max-ineligible
alternatives.

\begin{proof}
We begin by running 
the algorithm from Theorem 2 with a sufficiently large 
number of queries $C_1$ to obtain an
estimate $\hat \pi$ of $\pi$, such that
$
\Pr  ( \max_{\ell} \left | \hat \pi_\ell - \pi_\ell \right | \le \gamma/2  ) 
 > 
1- \epsilon/5
$ where $C_1$ is a constant in $n$ for every $\epsilon$ and $\gamma$.
The usual caveats for reflection of the mixture vector apply, and we learn $\pi$
for one of the orientations, arbitrarily chosen. 
We focus on our estimate of $\pi_k$, the probability that a query
is answered by a max-selector. 

We now initiate our standard discard algorithm,
which we will use to find max-ineligible alternatives, starting 
with an arbitrary $k$-set $S \subset U$.
In each discard step, we plan to identify the max element by posing enough queries to the population
that the frequency with which we observe the maximum element is close enough to $\pi_k$.
Let $\mathcal E_i$ be the bad event that discard round $i$ incorrectly identified
the max element in the set ($\pi_k$ was not recovered with sufficient precision), 
and let $\mathcal D$ be the 
success event whereby no bad events occur and 
the discard algorithm terminates with the alternatives that are ineligible 
for the max-selectors.  
Using $C_2$ queries in each round $i$ of the discard algorithm 
we can bound $\Pr[\mathcal E_i]$ using two-sided Chernoff bounds, 
followed by the Union Bound across the
$n-2$ rounds of our discard algorithm to obtain:
\begin{align*}
\Pr[ \mathcal D ] 
\ge 1 - \sum_{i=1}^{n-2} \Pr [ \mathcal E_i ] 
\ge 1 - 2(n-2) \exp \left (-\frac{\gamma^2}{8+2\gamma} C_2 \right ).
\end{align*}
Selecting $C_2 > \frac{8+2\gamma}{\gamma^2} \log (\frac{10}{\epsilon})\log(n) = C_2' \log(n)$, 
we obtain that $\Pr[ \mathcal D ] > 1- \epsilon/5$ 
with $C_2'n \log n$ queries across all the rounds of this discard algorithm, where $C_2'$ is 
constant in $n$.

Conditional on succeeding thus far, we can now create a padded
$k$-set from the max-ineligible alternatives $S^*_{\text{max}}$ that were never selected during our 
discard algorithm, 
forming a binary comparator for any $u$ and $v$ from a $k$-set of the form $\{u,v\} \cup S^{-2}_{\text{max}}$
for any $(k-2)$-set $S^{-2}_{\text{max}} \subset S^*_{\text{max}}$.
The max alternative will be selected with probability $\pi_k$, and
the second most maximal alternative will be selected with probability $\pi_{k-1}$.
Our goal is to hand this comparator off to the 
Feige et al.~comparison-based query 
algorithm \cite{feige1994computing}, but we note that the 
Feige et al.~algorithm requires a binary comparator that 
returns each element
with probabilities $\{p,1-p\}$ for some $p>1/2$. 
Our comparator, meanwhile,
will ``fail'' with constant probability $\Delta=1-\pi_k-\pi_{k-1}$, 
in the sense that 
some alternative $u^* \in S^{-2}_{\text{max}}$ will be selected that is
not one of the elements $u$ or $v$ being subjected to the 
noisy binary comparison.
If one
of $\pi_k$ or $\pi_{k-1}$ exceed $1/2$ then we can apply
the Feige et al.~procedure as-is, as we
can simply bundle any choice $u^* \in S^{-2}_{\text{max}}$ as a
choice of the less frequent of the 
two positions, producing the necessary binary comparator. 
If the selection probabilities $\pi_k$ and $\pi_{k-1}$ 
are both less than $1/2$, we 
can modify our comparator as follows.

For each step $s$ that the Feige et al.~algorithm
performs a query,
we repeat the query until we first see one of $k$ or $k-1$. 
This means that the outcome space is not 
$ \{ k, k-1,\text{fail} \} $,
but just $ \{k,k-1\} $, 
one of which will have a fixed probability $p > 1/2$.
Let $C_s$ be a random variable that takes a value 
equal to the number of queries
needed before first seeing $k$ or $k-1$. 
Seeing as $C_s$ is geometrically distributed, 
we have $\mathbb E[C_s] = 1/\Delta$, the inverse of the
failure probability.
But now the total number of queries being run throughout the 
Feige et al.~algorithm is $\sum_{s = 1}^{D n \log n} C_s$, 
for a constant $D$
large enough to control the Feige et al.~error 
to at most $\epsilon/5$, 
and so in expectation it takes
$
\mathbb E[\sum_{s = 1}^{D n \log n} C_s] 
= (1/\Delta) D n \log n 
$ queries.
By Markov's inequality the probability that the number of queries 
exceeds its expectation by a multiplicative factor of $5/\epsilon$
is $\Pr ( \sum_{s = 1}^{D n \log n} C_s \ge \frac{5D}{\epsilon \Delta} n \log n ) \le \epsilon/5.$

This modification of Feige et al.~will 
then still only take $C_3 n \log n$ queries, for $C_3= 5D/(\epsilon \Delta)$,
to succeed in sorting the max-eligible alternatives, with two possible bad events:
the failure of the Feige et al.~algorithm, with probability at most $\epsilon/5$,
and the number of queries exceeding the query budget, 
with probability at most $\epsilon/5$.
This latter event is only a concern when both $\pi_k$ and $\pi_{k-1}$ are less than $1/2$.

If we only needed to recover the choices of a max-selector, we would be done.
In order to recover all queries for any positions $\ell$, however, we also need 
to know the order of the 
max-ineligible alternatives. To recover such choices, we switch our focus from
the max-selector subpopulation to the min-selector subpopulation.
As a final step
we repeat the above procedure using an anchoring set $S^*_{\text{min}}$ 
that contains the $k-2$ maximum alternatives of the 
ordering we just recovered. We then run the 
Feige et al.~procedure to
order the alternatives in $U_{\text{scrap}} = u_{n-k-2} \cup S^*_{\text{max}}$.
Here  $u_{n-k-2}$ is the last element of the order recovered above, 
included to orient the sort order with regard to the overall sorted order,
and $S^*_{\text{max}}$ was the set of max-ineligible alternatives that we are
trying to order.
Since $|U_{\text{scrap}}|=k-1$, we can 
drive this error probability below $1-\epsilon/5$ 
with a large constant $C_4$ that does not depend on $n$, thereby also recovering the order 
over the max-ineligible alternatives.

We have outlined five separate bad events that would make the algorithm fail.
By taking the Union Bound over the five bad event probabilities, each controlled to have an
error $\le \epsilon/5$, the overall algorithm performs four sets of queries (where there are two bad events associated with the third set of queries) and succeeds with probability at least $1-\epsilon$
after $C_1 + C_2' n \log n + C_3 n \log n + C_4$ queries,
asymptotically $O(n\log n)$ as specified.
\end{proof}

%%%%%%%%%%%%%%%%%%%%%%%%%%%%%%%%%%%%%%

\section{Passive query complexity}

We now shift our attention to {\em passive} query algorithms, 
algorithms that pose all their queries at once
without the sequential benefit of previous responses. 
This framework is motivated by a common 
scenario in the study of large datasets of recorded choices
from online marketplaces, 
ranking algorithms, and recommendation systems.
These systems are teeming with choices being made 
over sets of alternatives,
but often one cannot adaptively guide the queries
towards our goal of recovering choice functions. What if
we simply had a large corpus of choices by an 
individual over random subsets?
Could we still recover
the choice function? If so, how few queries would we need?

In this section we analyze a model for this process, in which 
we are not able to select queries but instead have to watch them passively.
We focus on the output of a single $f$; handling population mixtures
in a passive query model is an interesting open question.

\vspace{\mynegvspace}
\xhdr{Model of passive choice streams}
We analyze the following process for generating and analyzing
a stream of random queries that arrive over time.
For each possible $k$-set $S$, we define a Poisson process of rate $\alpha$:
we draw a length of time $t$ from an exponentially distributed
random variable of rate $\alpha$ (with density $\alpha e^{-\alpha x}$),
and after this time $t$ elapses, we put the set $S$ into the stream.
We iterate this, repeatedly drawing a random length of time $t$
and putting $S$ into the stream when $t$ is reached.

We run the process for each $k$-set $S$ simultaneously, 
creating a merged stream of $k$-sets as they are selected
and put into the stream.
Now, suppose we observe the merged stream over the interval $[0,T]$.
For any fixed $k$-set $S$, the probability that it appears in
the interval is $1 - e^{-\alpha T}$, a quantity that we'll
refer to as $p_T$.
An algorithm is provided with the value $f(S)$ for each $k$-set
that appears in the stream during $[0,T]$. After seeing these
values it must correctly report the
value of $f(S)$ for almost every $k$-set.
We want an algorithm that can achieve this goal while
keeping $T$ as small as possible.

Our streaming model is arguably the simplest plausible
model for generating collections of (choice set, choice)
decisions that are independently drawn with replacement.
Other seemingly simple models 
that query for a random set of $k$-sets without replacement
exhibit an implausible slight dependence between decisions in the
decision collection.

\vspace{\mynegvspace}
\xhdr{A recovery algorithm for passive streams}
Our main result for this streaming model of passive queries
is that for $2 \leq \ell \leq k-1$
there is an algorithm able to
provide the desired performance guarantee for a choice of $T$
with $p_T \rightarrow 0$, which is to say 
that the probability of seeing any 
fixed $k$-set $S$ even once goes to zero. Equivalently, 
this result implies that we can recover $f(S)$ 
for almost every $k$-set after only seeing
a tiny fraction of all possible $k$-sets.

\begin{theorem}
Let $f$ be a choice function over $k$-sets in a 
universe of $n$ alternatives
 that
selects the $\ell^{\rm th}$ ordered position, with $k \geq 3$ 
and $2 \le \ell \le k-1 $ known.
For every $\eps > 0$ and $\delta > 0$,
there is a constant $\xi > 0$ and an algorithm
that takes the stream over the interval $[0,T]$
with $p_T = \xi \log n \log \log n / n$ and
with probability at least $1 - \delta$ will
recover the correct value of $f(S)$ for at least a $1 - \eps$
fraction of all $k$-sets.
\label{passiveprop}
\end{theorem}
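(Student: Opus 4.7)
The plan is to use the passive stream to extract pairwise comparisons between eligible alternatives, then invoke a sorting-in-one-round result to recover the embedding, and finally reconstruct $f$ on almost every $k$-set. The key structural observation is that for $2 \le \ell \le k-1$ exactly $k-1$ elements of $U$ are \emph{ineligible} --- the $\ell-1$ globally largest and the $k-\ell$ globally smallest --- and a direct position count shows that $f(S)$ always lies in the eligible region of any $k$-set $S$. Consequently, $f$ is determined by (i) the total order on the $n-k+1$ eligibles, and (ii) a top/bottom label for each of the $k-1$ ineligibles.

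First I would identify the set $I \subseteq U$ of ineligibles as those alternatives never observed as a choice in the stream. Any eligible $u$ sits at position $\ell$ in at least $\binom{n-\ell}{k-\ell}$ $k$-sets, so its expected number of selections is at least a constant times $\log n \log\log n \cdot n^{k-\ell-1}$, which grows at least as $\log n \log\log n$ whenever $\ell \le k-1$; a Chernoff and union bound over the $n$ alternatives then identifies $I$ correctly with probability $1-o(1)$. Borrowing the anchoring trick from Theorem~1, for each eligible pair $\{u,v\}$ and each $z \in I$ I define the \emph{witness set} $W_z(u,v) = \{u,v\} \cup (I \setminus \{z\})$. A short position calculation gives $f(W_z(u,v)) = \max(u,v)$ when $z$ is a bottom-ineligible and $f(W_z(u,v)) = \min(u,v)$ when $z$ is a top-ineligible. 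Each witness set is in the stream independently with probability $p_T$, so the pair $(u,v)$ has at least one witness with probability $1-(1-p_T)^{k-1} = \Theta(\log n \log\log n / n)$, and these events are essentially independent across pairs.

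To turn witnesses into unambiguous comparisons, the ineligibles must be classified as top or bottom. Fix any reference $z_0 \in I$; for every other $z \in I$ examine the (random) pairs $(u,v)$ for which both $W_{z_0}(u,v)$ and $W_z(u,v)$ appear in the stream. The two recorded choices agree iff $z$ and $z_0$ have the same type, and the expected number of jointly observed pairs is $\Theta(n^2 p_T^2) = \Theta((\log n \log\log n)^2) \to \infty$; a single joint observation suffices to classify $z$, and a union bound over the $O(k)$ ineligibles completes the bipartition with probability $1-o(1)$. Because $\ell$ is known, the two classes can be labeled as the $(\ell-1)$-sized set of top-ineligibles and the $(k-\ell)$-sized set of bottom-ineligibles.

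After classification, every observed witness yields a correct ordinal comparison between two eligibles, so the induced comparison graph on the $n-k+1$ eligibles is essentially an Erd\H{o}s--R\'enyi graph with edge probability $q = \Theta(\log n \log\log n / n)$. Here I would invoke the sorting-in-one-round analyses of Bollob\'as--Thomason and Alon--Azar to argue that for $\xi$ chosen sufficiently large, the transitive closure of these random comparisons correctly orders all but an $\eps'$-fraction of eligible pairs, with failure probability at most $\delta/2$. A final counting step converts this into a $k$-set bound: since $f(S)$ is obtained by picking the $(\ell - a(S))$-th ranked eligible in $S$ (where $a(S)$ counts top-ineligibles in $S$), any single misordered pair of eligibles corrupts at most $O(n^{k-2})$ $k$-sets, so a $(1-\eps')$-correct order produces a $(1-C_k\eps')$-correct value of $f$ on $k$-sets; choosing $\eps' = \eps/C_k$ yields the desired guarantee. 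The principal obstacle is this last step: matching the precise threshold $p_T = \Theta(\log n \log\log n / n)$ draws on the full strength of the Alon--Azar one-round sorting analysis, whereas the other pieces are standard concentration arguments inherited from the active-query construction of Theorem~1.
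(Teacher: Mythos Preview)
Your approach is correct and shares the paper's high-level architecture --- identify the $k-1$ ineligible elements from the stream, use them to anchor pairwise comparisons among the eligibles, and apply the Alon--Azar one-round sorting result --- but the paper executes each stage more simply.

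The main simplification is that the paper fixes a \emph{single} arbitrary $(k-2)$-subset $S^{-2} \subseteq S^*$ and only uses witness sets of the one form $\{u_i,u_j\} \cup S^{-2}$. Because $S^{-2}$ is fixed, all such queries act as the \emph{same} comparator (consistently $\max$ or consistently $\min$, depending on which ineligible was omitted), so the induced graph on $U' = U \setminus S^{-2}$ is directly $G_{n-k+2,\,p_T}$ with no need to reconcile conflicting orientations. Your use of all $k-1$ anchors forces the extra top/bottom classification phase; it is sound and buys a factor of $k-1$ in edge density, but since Alon--Azar already matches the stated $p_T = \Theta(\log n \log\log n / n)$ with a single anchor, this gain is immaterial. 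Similarly, you work to answer $f(S)$ correctly on $k$-sets containing ineligibles via the $(\ell - a(S))$-th eligible formula; the paper simply answers arbitrarily on those sets and absorbs them into the error budget, since there are at most $(k-2)\binom{n-1}{k-1} \leq \gamma \binom{n}{k}$ of them. Finally, the paper explicitly splits the stream into two disjoint intervals so that the set $\D_1$ used to identify $S^*$ is independent of the set $\D_2$ used to form the comparison graph; you do not split, which is still fine by the union-bound argument $\Pr[\text{fail}] \leq \Pr[\hat I \neq I] + \Pr[\text{Alon--Azar fails on the true }I\text{'s witness graph}]$, but the paper's splitting makes the independence explicit. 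In short: same skeleton, same key lemma, but the paper prunes the classification and the ineligible-set bookkeeping that you carry through.
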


\begin{proof}
The algorithm operates in two phases similarly to the algorithm
for active queries, in that it first identifies the set $S^*$ of $k-1$
ineligible alternatives that cannot be selected in any
query.  Then it selects $k-2$ of these arbitrarily to form a set
$S^{-2}$ and analyzes sets of the form 
$\{u_i, u_j\} \cup S^{-2}$ to decide the relative ordering of
pairs $(u_i, u_j)$.
The key difference from the case of active queries is that 
the algorithm needs to operate using only the $k$-sets that
were handed to it initially at random, so it can't steer the queries
to seek out the set $S^*$, or run an adaptive sorting algorithm
on arbitrary pairs $(u_i, u_j)$.

Because the algorithm has two phases, we split the time
interval $[0,T]$ into two intervals $[0,T_1]$ and $(T_1, T_1 + T_2]$.
Note that for a given $k$-set $S$, the Union Bound implies that 
$p_{T_1} + p_{T_2}$ is an upper bound on the probability seeing $S$
during $[0,T]$, and hence 
$p_{T_1} + p_{T_2} \geq p_T$.

Given a time interval $[0,T]$ for generating the stream,
for a sufficiently large constant $b$ to be specified below,
we choose $T$ so that $p_T \geq b \log n / n + b \log n \log \log n / n$.
We split $[0,T]$ into $[0,T_1]$ and $(T_1, T_1 + T_2]$ so that
$p_{T_1} \geq b \log n / n$ and $p_{T_2} \geq b \log n \log \log n / n$.

Let $\D[a,b]$ denote the $k$-sets observed in the stream during 
the interval $[a,b]$.
We will use $\D_1 = \D[0,T_1]$ for the first phase, and
$\D_2 = \D (T_1,T_1+T_2]$ for the second phase.
Note that the contents of $\D_1$ and $\D_2$ are
independent of each other, an important facet 
of the streaming model.

\xhdr{Phase 1: Finding ineligible elements}
In the first phase, the algorithm will look at $\D_1$
and identify the set $S^{**}$
of elements $v \in U$ for which 
$v$ is not the output $f(S)$ for any $S \in \D_1$.
Clearly $S^* \subseteq S^{**}$, since no element of $S^*$ can be
the answer to any $f(S)$.
Now fix $v \not\in S^*$; what is the probability that it belongs to $S^{**}$?

Every eligible alternative $v$ will be the output $f(S)$ for some $S \in \D_1$
as long as there's a $k$-set in $\D_1$ containing exactly $\ell-1$ elements 
above $v$ and exactly $k-\ell$ element below $v$.
Note that since $2 \leq \ell \leq k-1$, this means
that there are a non-zero number of elements above and below $v$
in this case.
Let's call such a query a ``revealing query'' for $v$. 
Out of ${n \choose k}$ possible queries 
there are at least $n-2$ such revealing queries for $v$, 
since in the worst case there's only 
one way to choose $k-2$ alternatives above
and $n-2$ ways to choose one
alternatives below, or vice versa. 
Each revealing query for 
$v$ is chosen with probability $p \ge b \log n /n$.

The probability that none of these $n-2$ revealing queries 
is chosen is then
\begin{align*}
(1 - p)^{n-2} &\le(1 - b(\log n) /n)^{n-2} \le (1 - b(\log n) /n)^{n/2} 
\le e^{-(b/2) \log n} = n^{-b/2}.
\end{align*}
Now taking the Union Bound over all $n-k$ eligible alternatives 
$v$ says that the 
probability there exists a $v$ for which no revealing query is chosen
is bounded by $n \cdot n^{-b/2} \le n^{-(b/2)+1}$. 
Let $\ev_1$ denote the ``bad event'' that $S^* \neq S^{**}$, and
let $\delta_0 \leq \delta/4$ be a constant.
By choosing $b$ large enough,
we have $\Prb{\ev_1} \leq n^{-(b/2)+1} \leq \delta_0.$

\xhdr{Phase 2: Simulating pairwise comparisons}
In phase 2 we now use $k-2$ of the ineligible elements
as anchors for performing pairwise comparisons.
Recall that because we have split the stream over
two disjoint time intervals, the algorithm's behavior now is independent of its success or failure at finding $S^*$
in phase 1, so in our analysis we'll assume it has succeeded
in finding $S^*$.

For this phase, we use $p$ to denote $p_{T_2}$, where $T_2$ 
was chosen such that $p \geq b \log n \log \log n / n$.
To begin, we fix an arbitrary set $S^{-2}$ of $k-2$ elements from $S^*$, and
let $U' = U \setminus S^{-2}$. Let $G$ be the undirected graph on $U'$ in which $(u_i,u_j)$
is an edge if and only if $\{u_i, u_j\} \cup S^{-2}$ is a $k$-set in $\D_1$.
Note that $G$ is a uniform sample from the distribution
$G_{n-k+2,p}$, the Erd\H{o}s-R\'enyi random graph distribution.
By a theorem of Alon and Azar \cite{alon1988sorting}, it follows that if we compare
the pairs of elements $(u_i,u_j)$ defined by the edges of $G$,
then with probability at least $1 - \delta_1$
we will be able to infer the relative
order of at least a $1 - \gamma$ fraction of pairs of elements of $U'$,
where $\gamma$ and $\delta_1$ depend on $b$.
The remaining $\gamma$ fraction of pairs are simply left with their choice uninferred.
Let $\ev_2$ denote the ``bad event'' that the Alon-Azar algorithm does not succeed,
with $\Prb{\ev_2} \leq \delta_1$.
We choose $b$ large enough that
$\gamma \leq \eps / 2k^2$ and $\delta_1 \leq \delta / 4$.
Let $H$ be a directed acyclic graph on $U'$ in which 
$u_i$ points to $u_j$ if the Alon-Azar procedure has inferred
that $u_i$ is ranked ahead of $u_j$, and there is no edge between 
$u_i$ and $u_j$ otherwise.

Now we use the following procedure to answer queries of the
form $f(S)$.
If $S \not\subseteq U'$ then $S$ contains an ineligible element and we answer arbitrarily.
Otherwise, if $S \subseteq U'$, we look at all the pairwise
comparisons of elements in $S$ according to $H$.
If these pairwise comparisons form a complete acyclic digraph
then we choose the $\ell^{\rm th}$ element in order; otherwise we answer
arbitrarily.

For how many $k$-sets do we get the correct answer?
Let $P$ denote the set of node pairs in $U' \times U'$ for which
there is no edge in $H$.
If $\ev_2$ does not happen, then 
$|P| \leq \gamma {n - k + 2 \choose 2}.$
The number of $k$-sets that involve a pair from $P$ is thus at most
\small
\begin{align*}
\sum_{(u_i,u_j) \in P} {n - k \choose k - 2} 
\leq \gamma {n - k + 2 \choose 2} {n - k \choose k - 2} 
\leq \gamma k^2 {n \choose k}.
\end{align*}
\normalsize
Meanwhile, the number of $k$-sets that involve an element outside $U'$
is at most 
\small
\begin{align*}
(k-2) {n-1 \choose k-1} \leq (k-2) \left(\frac{k}{n}\right) {n \choose k}
\leq \gamma {n \choose k},
\end{align*}
\normalsize
provided 
that $n$ is large enough relative to $k$ that $k(k-2)/n \leq \gamma$.
But if a $k$-set $S$ satisfies $S \subseteq U'$, and $S \times S$
contains no pair of $P$, then all pairwise comparisons inside $S$
have been determined and thus we can report the correct
value of $f(S)$.  Hence if $\ev_2$ does not occur,
we report the correct value on all but at most
$\gamma (1 + k^2) {n \choose k} \leq \eps {n \choose k}$ $k$-sets, as desired.

\xhdr{The full algorithm and its analysis}
Now let's consider the full algorithm.
It starts by identifying a set $S^{**}$ of ineligible elements in phase 1.
If $S^{**}$ does not have $k-1$ elements, it terminates with no answer.
Otherwise, it runs the second phase on 
the assumption that $S^* = S^{**}$.

If the event $\ev_1 \cup \ev_2$ does not occur, then $S^* = S^{**}$ 
and the output of phase 2
satisfies the performance guarantee of Theorem \ref{passiveprop}.
With probability at least 
$1 - (\delta_1 + \delta_2) \geq 1 - \delta/2$, 
this union $\ev_1 \cup \ev_2$ does not occur,
in which case the full algorithm
satisfies the performance guarantee.
\end{proof}

The above proof holds for all values of $\ell$ except for
the two extremes of $\ell = 1$ and $\ell = k$, as stated.
This distinction is intriguing since the two extreme values 
are the ones that don't exhibit choice-set effects:
the individual making decisions is then selecting elements
according to a fixed total ordering of $U$.
The challenge with $\ell = 1$ and $\ell = k$ is actually that
the notion of ``ineligible'' elements becomes more subtle:
there are elements whose relative position in the embedding
can be resolved, but only with a very large number of passive queries.
Finding the right performance guarantee for this case is an
interesting open question.

\section{Distance comparison}
\label{s:distancecomparison}

All of our results thus far focus on recovering 
choice functions that are comparison-based
functions over ordered alternatives. 
As noted in the introduction,
this class of functions 
is surprisingly rich, with the ability 
to capture compromise effects in choices 
along any one-dimensional frontier, a choice-set effect
that cannot be exhibited by functions
that merely maximize over an ordering.
The preceding sections develop a theory of inference
 for such comparison-based functions.

In this section we shift our focus to recovering choice
functions defined by a different structural relationship, 
namely distance comparisons in a metric embedding. 
A choice function $f$ on $k$-sets is {\em distance-comparison-based} 
if the value $f(S)$ can be computed purely using comparisons
on the pairwise distances in a embedding $h$ with metric $d(\cdot,\cdot)$, 
i.e.~based
on comparisons on the set of pairwise distances 
$\{d(h(u_i),h(u_j)) : u_i, u_j \in S\}$. 

Distance comparisons form the basis of 
an additional important family of choice scenarios, 
namely choice queries answered with the ``most medial'' 
or the ``most distinctive'' alternative in a set, and in particular they
model an important choice-set effect 
called {\em similarity aversion} \cite{trueblood2013not}:
given a choice between two dissimilar alternatives $A$ and $B$
and one of two alternatives $A'$ and $B'$ that are similar to $A$ and $B$
respectively, similarity aversion arises when $f(\{A,B,B'\})=A$ but
$f(\{A,B,A'\})=B$. Essentially: given two similar alternatives and one dissimilar
alternative, the dissimilar option is chosen. 

It is important to note that similarity aversion cannot be modeled
by a choice function that is strictly comparison-based: such a function
can only evaluate ordinal comparisons, and when considering two
elements $A$ and $B$ embedded in one dimension, a comparison-based
function cannot resolve whether a third element $A'$ positioned
between $A$ and $B$ is closer to $A$ or closer to $B$.  

The remainder of the section has the following format. First
we show how similarity aversion can be modeled using 
distance-comparison-based choice functions. Next, 
we present two observations that suggest some of the difficulties that
must be overcome to learn such functions, a seemingly
more difficult inference context than comparison-based functions.
We then connect our observations to a broader literature on learning
metric embeddings, albeit without resolving our inference questions.

\vspace{\mynegvspace}
\xhdr{Similarity aversion from distance comparisons}
We now describe two choice functions of principle importance,
 and their capacity for being formulated as distance-comparison-based functions: 
 the median choice function and the outlier choice function. 
 We restrict our definitions to choices over $k$-sets
where $k$ is odd. The outlier choice function for triplets ($k=3$) will serve
as our model of similarity aversion.

The median choice function selects the element of a 
$k$-set S in $\mathbb R^m$ with metric $d(\cdot,\cdot)$
that minimizes the sum of distances to all other elements. 
In $m=1$ dimension this choice is the traditional median element,
and it can be selected according to the following distance-comparison
procedure:
repeatedly find the pair of alternatives 
$x, y \in S$ that are furthest apart and remove them from $S$. 
Since $k$ is odd, after
$(k-1)/2$ rounds there will be a single remaining element; return
that element as the choice. Thus we see that the median choice function
can be formulated as both a comparison-based
function---in the earlier sections, $q_{(k-1)/2,k}$---or 
as a distance-comparison-based function.

For $m>1$ dimensions and sets of size $k=3$ it is easy to verify
 that the above distance-comparison
procedure will still return the element that minimizes the sum of distances.
For $m>1$ and $k>3$, however, it is an open question whether
a distance-comparison-based function can return the element that minimizes the
sum of distances. Observe that this question
amounts to solving a version of 
the Fermat-Weber problem \cite{wesolowsky1993weber}
with a very restricted functional toolkit, and may be quite difficult.
We are not aware of any prior work on the capabilities of 
distance-comparison-based functions. Here we are 
principally interested in the special case of $k=3$, allowing us to skirt
this difficulty.

Building on the median choice function, 
the outlier choice function selects the element of a 
$k$-set $S$ that is farthest 
(in $\mathbb R^m$ using $d(\cdot,\cdot)$) from the median element 
(the median element being the element that
minimizes the sum of distances to all elements).
The outlier choice function can clearly be written as a 
distance-comparison-based function whenever 
the median choice function can. For $k=3$ elements 
it is precisely a model of similarity aversion on triplets.

As a general comment on the capabilities of distance-comparison-based functions,
it is clear that they can make choices that ordinary 
comparison-based functions cannot, but it is important to highlight
that the reverse is also true: the median choice function 
is the {\em only} position-selection function (in one dimension) 
that can be formulated as a distance-comparison-based function. A 
distance-comparison-based function cannot select a maximal or minimal 
element (or any off-center position), since it has no way of orienting
 a choice with regard to the direction of ``more'' (in any dimension).
The sets of comparison-based functions and distance-comparison-based
functions are therefore not neatly nested.

\begin{figure}[t]
\centering
\includegraphics[width =0.7\myfigwidth]{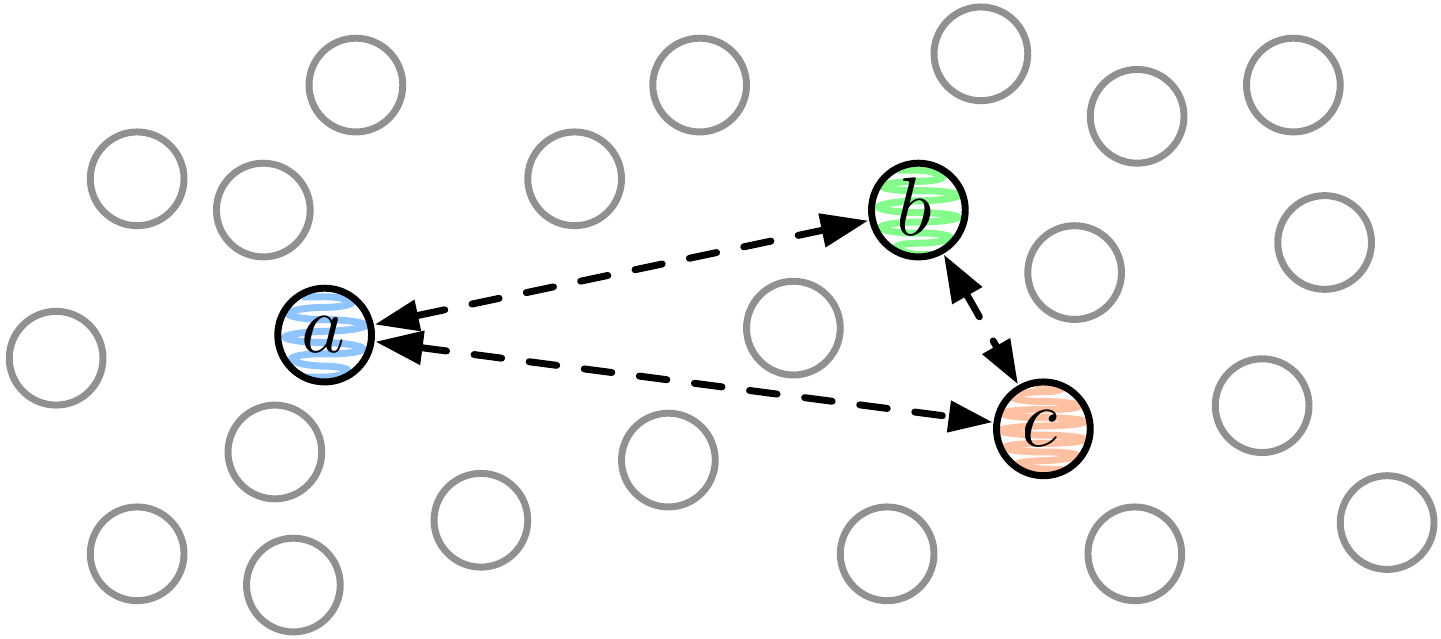}
\vspace{-5 pt}
\caption{
A $k$-set $S= \{a, b, c\}$ of $k=3$ alternatives embedded in $2$
dimensions. In this setting we can think of 
distance comparison as a comparison-based choice 
between three pairwise distances $\{d(a,b), d(b,c), d(a,c)\}$. 
When a distance-comparison-based choice function $f$ selects
the {\em outlier} $a$, this choice is effectively a comparison-based
choice of the shortest distance, $d(b, c)$.
When $f$ selects the {\em median element} $b$, this 
choice is effectively a comparison-based choice of the
longest distance, $d(a,c)$.
}
\label{f:fig3}
\vspace{-5pt}
\end{figure}

\vspace{\mynegvspace}
\xhdr{Distance comparison for triplets}
In the particular case of $k=3$, where choices are made over 
triplets $S=\{u,v,w\}$, we can think of a choice 
for an alternative $u \in S$  implicitly as a choice
for the distance $d(v,w)$ between the two remaining 
alternatives in $S$.
For a triplet $\{u,v,w\}$, the effective comparison-based query is over
the set of distances $\{d(u,v),$ $d(u,w),$ $d(v,w)\}$, where a choice of a
pairwise distance maps to a choice of the complementary
alternative. See Figure 3 for an illustration.
There are $N={n \choose 2}$ distances, and
so it may seem as though it would be possible to 
learn the ordering in $O(N \log N)$ queries in the active
query framework, with similar results carried over for other query
frameworks. Reality is more complicated.

We now give two observations on the difficulties of carrying
over comparison-based results to distance comparisons.
The first observation asks: when are there enough queries to learn 
the relative ordering of all the distances? The second observation
pertains to the restricted nature of the distance triplets that can be queried.

First, notice that each query for a $3$-set furnishes $2$ inequalities 
between distances.
There are ${n \choose 3}$ triplet queries for $n$ elements, which means that
querying for every possible $3$-set would produce $2{n \choose 3}$ inequalities.
Meanwhile there are ${n \choose 2}$ pairwise distances, 
meaning that there are ${n \choose 2}!$ possible permutations of the distances,
only one of which is the sorted order.
A sufficient condition for inferring
all choices is to know the sorted order of all the distances. 
Focusing on the outlier choice function (selecting the minimum distance that 
effectively returns the outlier element) in 
one does not need to know the relative order of the two largest distances. 
This tells us that it would take $\log_2({n \choose 2}!)-1$ bisections of the set of permutations
to identify the sorted order of all but the two largest distances.

We observe that $2{n \choose 3} < \log_2({n \choose 2}!)-1$ for $n \le 5$,
which tells us that for $n \le 5$ elements we will learn the choice for every
query before we can possibly know the sorted order of the relevant
pairwise distances. This tells us that a generic procedure that seeks to learn
all the pairwise distances will not always succeed. 

While our earlier query complexity results focus on asymptotic
complexity, the correctness of these algorithms holds for all $n$. 
The limiting observation presented here is specific to distance comparisons and 
does not apply to ordinary comparisons:
in that context there are only $n!$ permutations rather than ${n \choose 2}!$.

Second, observe that we cannot query for general triplets 
of distances $\{d(u,v),$ $d(w,x),$ $d(y,z)\}$, but are in fact
restricted to queries in the special case of $w=u$, $y=x$, and $z=v$. 
It is unclear for what $n$ it is possible to learn the relative ordering
of all the distances from such restricted distance comparisons.
While both of these initial observations are discouraging, 
an efficient inference algorithm for distance-comparison-based
functions would contribute a useful tool for learning in the presence
of similarity aversion and other choice-set effects, 
and we deem it important to present
distance-comparison-based functions as significant objects of study.

As an additional open direction for future work,
we briefly mention the choice-set effect known as the decoy effect
(or asymmetric dominance).
The decoy effect describes when the presence of a similar but inferior 
alternative increases the desirability of an option. The difference
between similarity aversion and the decoy effect is that the former
hinges on similarity while the latter
incorporates both similarity and inferiority. As such, modeling the decoy
effect requires a composition of both comparison and distance-comparison.
We leave the study of this alluring general class of choice functions,
which contains comparison-based and distance-comparison-based
choice functions as special cases, as future work.

\vspace{\mynegvspace}
\xhdr{Relationship to metric embeddings}
Beyond providing a model for 
similarity avoidance in the behavioral sciences, 
the ability to recover distance-comparison-based choice
functions also speaks to a broad literature on 
learning metric embeddings of data from distance-comparison-based
queries \cite{schultz2004learning}. 
In that literature, the {\em stochastic triplet embedding} technique
\cite{van2012stochastic} has recently been introduced as a
way to embed a generic dataset of elements
through answers to choices of the form ``Is A more similar to B or to C?''.
This question is effectively a comparison-based query requesting a choice
from the set $\{d(A,B),d(A,C)\}$. As an extension of that work,
the {\em crowd median algorithm} 
\cite{heikinheimo2013crowd} tries to
learn embeddings from triplet queries with the request
``Out of three shown items pick one that appears to be
different from the two others.''
The crowd median algorithm is employing precisely
the outlier choice function for $k=3$ described in this section.

Known work on the crowd median algorithm, however, leaves open the 
question of what embeddings it can learn. 
We observe that outside the behavioral modeling
that drove our work, a slight modification of the crowd median algorithm
to align with our results gives a method that can in fact learn embeddings
efficiently. We define the {\em generalized crowd median algorithm}, 
which instead asks questions
of the form ``of these $k$ pairs, which pair is least similar?''.
Under this more flexible query framework,
the ordering on the distance could then 
be inferred by the algorithms developed in this work:
for $n$ elements, it would then be possible in an active query 
framework to learn the ordering
of the $N={ n \choose 2 }$ distances in $O(N \log N)$ queries, for large $n$.

Our treatment of distance-comparison-based
queries also speaks to a broad line or research in
psychology on pair selection queries used to infer cognitive embeddings
\cite{smiley1979conceptual,ji2004culture,talhelm2015liberals}.
In these settings, sets of $k$ elements (where $k$ is often large) 
are presented to subjects with 
the question ``which two elements are most similar?''.
Such queries generalize the crowd-median
algorithm.
Lastly, very recent work has developed 
learning results for binary clustering (a simple form of embedding)
using so-called triangle queries of the 
form ``which of these birds belong to the same 
species?'' \cite{vinayak2016crowdsourced}, with possible 
relevance to learning from distance-comparison-based choices.

\section{Discussion}

The prevalence of choice-set effects in human decision-making
highlights a need for a principled inference model that can
support learning such effects.
In this work we've proposed a framework aimed at helping with 
such an integration, focussing on comparison-based choices.
A natural line of inquiry extending from choice-set effects quickly 
leads to rich issues related to the
theory of sorting, including
passive sorting
(``sorting in one round'')
and sorting with noisy information.

There are clearly many directions for further work.
We begin by mentioning a very concrete problem: while 
we studied passive queries and population mixtures, we did not
attempt to combine these two aspects to infer a mixed collection
of choice functions from a stream of passively observed queries.
A challenge in trying to achieve such a combination can be
seen in the technical ingredients that would need to be brought together;
the mechanics of the Feige et al.~procedure \cite{feige1994computing} 
and Alon-Azar procedure \cite{alon1988sorting} are not obviously compatible. 
Even the basic question of binary sorting with passive queries in the
presence of noisy information appears to be fairly open.

Many questions remain regarding how to capture relevant
population-level heterogeneities in large-scale
choice corpora.
Our framework for comparisons generally assumed a universally
agreed upon ordered embedding, and
it is very reasonable to suggest generalizations to
a dispersed distribution over possible embeddings.
A long line of work has brought distributions over rankings
into the literature on learning preferences
\cite{mallows1957non,fligner1986distance,braverman2009sorting},
including through comparisons \cite{lu2011learning}, 
and it would be natural to explore analogous 
distributional modeling for the embedding
that underlies comparison-based choice functions.
Analyzing mixtures of comparison-based choice functions 
as a model of probabilistic discrete choice \cite{baltas-doyle}
is another closely related open research direction.

One can also consider generalizing the embedding that defines
the structure of the alternatives and the choice sets.
For example, we could think about the alternatives as being embedded not
just in one dimension (or a one-dimensional Pareto frontier)
but in multiple dimensions,
and an individual could execute a sequence of comparison-based 
rules to select an item from a choice set.
Such a generalization could provide a way to incorporate
a number of other well-documented choice-set effects
\cite{trueblood2013not}, and could
form intriguing potential connections
to the {\em elimination by aspects} \cite{tversky-eba} model
of discrete choice. 

We find it encouraging that in the face of
choice behavior much more complex
than ordered preferences,
we are able to fully match most of the known
query complexity results from the theory of sorting
from binary comparisons. 
Continued work to develop a more complete
theory of inference
for 
other 
increasingly rich classes of choice functions has
the potential to lay a foundation for a much needed unified
theory for learning choice-set effects.

\bibliographystyle{plain}
\bibliography{choicerefs}

\end{document}